\renewcommand*{\p@section}{\,}
\renewcommand*{\p@subsection}{\S\,}
\renewcommand*{\p@subsubsection}{\S\,}
\newcommand{\be}{\begin{equation}}
\newcommand{\ee}{\end{equation}}
\newcommand{\bea}{\begin{eqnarray}}
\newcommand{\eea}{\end{eqnarray}}
\newcommand{\ba}{\begin{aligned}}
\newcommand{\ea}{\end{aligned}}
\newtheorem{thm}{Theorem}[section]
\newtheorem{cor}[thm]{Corollary}
\newtheorem{lem}[thm]{Lemma}
\newtheorem{prop}[thm]{Proposition}
\theoremstyle{definition}
\newtheorem{rem}[thm]{Remark}
 \numberwithin{equation}{section}
\newcommand{\CC}{\ensuremath{\mathbb{C}}}
\newcommand{\N}{\ensuremath{\mathbb{N}}}
\newcommand{\R}{\ensuremath{\mathbb{R}}}
\newcommand{\tr}{\operatorname{tr}}
\newcommand{\Mat}{\operatorname{Mat}}
\newcommand{\sgn}{\operatorname{sgn}}
\newcommand{\ic}{\ensuremath{\mathfrak{i}}}
 \def\cP{{\mathcal P}}
\def\1{{\boldsymbol 1}}
\def\U{{\rm U}}
\def\hA{\hat{A}}
\def\hB{\hat{B}}
\def\C{\mathbb{C}}
\def\Gn{{\rm GL}(n,\C)}
\def\Gd{{\rm GL}(d,\C)}
\def\Gl{{\rm GL}(\ell,\C)}
\def\gn{\mathfrak{gl}(n,\C)}
\def\gl{{\mathfrak{gl}}(\ell,\C)}
\def\cD{{\mathcal D}}
\def\cF{{\mathcal F}}
\def\cH{{\mathcal H}}
\def\dz {\left.\frac{d}{dz}\right|_{z=0}}
\def\cL{{\mathcal L}}
\def\cA{{\mathcal A}}
\def\cB{{\mathcal B}}
\def\cG{{\mathcal G}}
 \newcommand\br[1]{\{ #1 \}}
\begin{document}

\author[M.~Fairon and L.~Feh\'er]{} 

\title[Poisson structures on $\Mat_{n\times d}(\C) \times \Mat_{d\times n}(\C)$]{A decoupling property of some Poisson
structures on $\Mat_{n\times d}(\C) \times \Mat_{d\times n}(\C)$
supporting  $\Gn \times \Gd$ Poisson--Lie symmetry} 


\maketitle

\medskip
\begin{center}
M.~Fairon${}^{a}$ and L.~Feh\'er${}^{b,c}$
\\

\bigskip
${}^a$School of Mathematics and Statistics, University of Glasgow\\
University Place, G12 8QQ Glasgow, United-Kingdom\\
e-mail: maxime.fairon@glasgow.ac.uk

\medskip
${}^b$Department of Theoretical Physics, University of Szeged\\
Tisza Lajos krt 84-86, H-6720 Szeged, Hungary\\
e-mail: lfeher@physx.u-szeged.hu

\medskip
${}^c$Institute for Particle and Nuclear Physics\\
Wigner Research Centre for Physics\\
 H-1525 Budapest, P.O.B.~49, Hungary

\end{center}

\medskip
\begin{abstract}
We study a holomorphic Poisson structure defined on the linear space
$S(n,d):= \Mat_{n\times d}(\C) \times \Mat_{d\times n}(\C)$
that is covariant under the natural left actions of the standard $\Gn$ and $\Gd$  Poisson--Lie groups.
The Poisson brackets of the matrix elements contain quadratic and constant terms, and the Poisson tensor is
non-degenerate on a dense subset.
Taking the $d=1$ special case gives a Poisson structure on $S(n,1)$, and we construct
a local Poisson map from the Cartesian product of $d$ independent copies of $S(n,1)$ into $S(n,d)$,
which is a holomorphic diffeomorphism in a neighborhood of zero.
The Poisson structure on $S(n,d)$ is the complexification of
a real Poisson structure on $\Mat_{n\times d}(\C)$ constructed by the authors and Marshall,
where a similar decoupling into $d$ independent copies was observed.
We also relate our construction to a Poisson structure on $S(n,d)$ defined by
Arutyunov and Olivucci in the treatment of the
complex trigonometric spin Ruijsenaars--Schneider system
by Hamiltonian reduction.
\end{abstract}

{\linespread{0.5}\tableofcontents}

\section{Introduction}
\label{sec:I}
In this paper we prove a remarkable `decoupling property' of a holomorphic Poisson structure
defined on the space
\be
S(n,d):= \Mat_{n\times d}(\C) \times \Mat_{d\times n}(\C),
\label{I1}\ee
which appeared in recent derivations of trigonometric spin Ruijsenaars--Schneider models \cite{KZ} by Hamiltonian reduction
\cite{AO,FFM}.
The decoupling means that the Poisson algebra of $S(n,d)$   will  be realized using $d$ independent
(pairwise Poisson commuting)
copies of the Poisson algebra of $S(n,1)$.
The spaces $S(n,d)$ are defined for arbitrary pairs
of natural numbers, but the decoupling requires that both $n$ and $d$ are greater than $1$.
Our result is expected to be useful, for example,  for the further studies of the holomorphic
spin Ruijsenaars--Schneider systems.

To set the stage, for any natural number $\ell$ we introduce the Drinfeld--Jimbo classical $r$-matrix $r^\ell$ by
\be
r^\ell := \frac{1}{2} \sum_{1\leq j<k\leq \ell} E_{jk}(\ell) \wedge E_{kj}(\ell),
\label{I2}\ee
where $E_{jk}(\ell)$ is the usual elementary matrix of size $\ell \times \ell$.
We also need
\be
r^\ell_{\pm}:= r^\ell \pm \frac{1}{2} I^\ell
\quad
\hbox{with}\quad I^\ell:= \sum_{j,k=1}^\ell E_{jk}(\ell) \otimes E_{kj}(\ell).
\label{I3}\ee
 Note that for $\ell =1$   $r^\ell = 0$ and
$I^\ell$ can be viewed as $1\otimes 1$.
Denoting the elements of $S(n,d)$ as pairs $(A, B)$, and employing the standard
tensorial notation \cite{A,FT},
 the pertinent Poisson bracket can be written as follows:
\bea
  &&\br{A_1,A_2}_\kappa=-\kappa (r^n A_1 A_2 + A_1 A_2 r^d)\,,\nonumber \\
  &&\br{B_1,B_2}_\kappa=-\kappa (B_1 B_2 r^n + r^d B_1 B_2) \,,   \label{I4}\\
  && \br{A_1,B_2}_\kappa= \kappa (B_2 r_+^n A_1 + A_1 r_{+}^dB_2 +  {C}_{12}^{n\times d})\,. \nonumber
\eea
 Here, we use the notations \eqref{I2}, \eqref{I3} together with
 \be
 C_{12}^{n\times d}:=\sum_{i=1}^n \sum_{\alpha=1}^d  E_{i\alpha}^{n\times d}\otimes E_{\alpha i}^{d\times n},
\label{C12} \ee
 where $E_{i\alpha}^{n\times d}\in \Mat_{n\times d}(\C)$ is the elementary matrix having a single non-zero
 entry, equal to $1$, at the $i\alpha$ position.
 One could fix the arbitrary constant $\kappa\in \C^*$ without loss of generality, but it will be advantageous not to do so.

The Poisson structure \eqref{I4} represents the complexification
of a $\U(n)\times \U(d)$ covariant
real Poisson structure
on $\Mat_{n\times d}(\C) \simeq \R^{2nd}$ considered in \cite{FFM}.
By simple changes of variables (see below)
it also reproduces the holomorphic Poisson bracket
defined on $S(n,d)$ by Arutyunov and Olivucci \cite{AO}.
In the papers mentioned it was natural to assume that $n >1$, but here we assume only
that either $n$ or $d$ is greater than $1$.
The $d=1$ (or $n=1$) cases provide the building blocks from which the general $S(n,d)$ cases
will be realized via the decoupling.

The above Poisson brackets have remarkable Poisson--Lie covariance properties.
 (For background on the theory of Poisson--Lie groups, one may consult, for example,
\cite{A,KS,STS1,STS2}.)
To describe these, we equip the group $\Gl$ with the standard multiplicative Poisson bracket
given in tensorial notation by
\be
\{ g_1, g_2\}_G^\kappa := \kappa [g_1 g_2, r^\ell].
\label{I6}\ee
The subscript $G$
expresses that this Poisson bracket lives on the group $G=\Gl$.

Then the linear left-action of $\Gn$ on $S(n,d)$, defined by
\be
\Gn \times S(n,d)   \ni (g, A,B) \mapsto (g A, B g^{-1}) \in S(n,d)
\label{I7}\ee
enjoys the Poisson--Lie property, which means that the map \eqref{I7} is Poisson if $\Gn$
is equipped with the bracket \eqref{I6} for $\ell =n$ and $S(n,d)$ is equipped with
the bracket \eqref{I4}.
Similarly, the linear left-action of $\Gd$, given by
\be
\Gd \times S(n,d)   \ni (g, A,B) \mapsto (Ag^{-1}, g B) \in S(n,d),
\label{I8}\ee
also has the Poisson--Lie property, where $\Gd$ is equipped with  the bracket \eqref{I6}, for $\ell =d$.

Now we describe our main result, which was motivated by an analogous result of \cite{FFM}.
Let us introduce the group
\be
D(\ell):= \Gl \times \Gl.
\label{Dl}\ee
This is the Drinfeld double of the Poisson--Lie group $\Gl$. The dual Poisson--Lie group,
$\Gl^*$, is the subgroup of $D(\ell)$ consisting of pairs $(h_+, h_-)$, where $h_+$ and $h_-$ are invertible
upper triangular and, respectively, lower triangular matrices whose respective diagonal entries are inverses
of each other, i.e., $(h_-)_{jj} = 1/ (h_+)_{jj}$ for $j=1,\dots, \ell$.
 Consider the space $S(n,1)$, with elements $(a^1,b^1)$, endowed with the Poisson bracket
\eqref{I4}. Introduce the (locally defined) map
\be
(g_+, g_-): S(n,1) \to \Gn^*
\label{gpm}\ee
by the following definition:
\begin{equation} \label{Eq:gplus}
 (g_+)_{jj}=\sqrt{G_j /G_{j+1}}\,, \quad (g_+)_{jk}=\frac{a_j^1b_k^1}{\sqrt{G_k G_{k+1}}}\,\text{ for}\,\, j<k\,,
\end{equation}
and
\begin{equation} \label{Eq:gminus}
 (g_-^{-1})_{jj}=\sqrt{G_j /G_{j+1}}\,, \quad (g_-^{-1})_{jk}=\frac{a_j^1 b_k^1}{\sqrt{G_j G_{j+1}}}\,\text{ for}\,\, j>k\,,
\end{equation}
using the functions
\begin{equation} \label{Eq:Gj}
 G_j=1 +\sum_{k=j}^n a_k^1b_k^1\,, \quad G_0=G_{n+1}=1\,,
\end{equation}
which are well-defined only locally, including a neighborhood of zero.

\begin{thm} \label{Thm:Main}
For any $n$ and $d$ greater than $1$, take $d$ copies of $S(n,1)$, each equipped with the Poisson bracket \eqref{I4},
 and denote their elements by $(a^\alpha, b^\alpha)$,
$\alpha =1,\dots,d$.
Let $(a,b)$ stand for the collection of the $(a^\alpha, b^\alpha)$,
$A^\alpha$ and $B^\alpha$ for the columns and the rows of the  matrices $(A,B) \in S(n,d)$, respectively.
Define
the (local) map
\begin{equation}\label{m}
m: S(n,1) \times \cdots \times S(n,1) \to S(n,d)
\end{equation}
by the formulae
$A^1(a,b) = a^1$, $B^1(a,b)= b^1$ and, for $\alpha \geq 2$,
\begin{subequations}
\begin{align}
&A^\alpha (a,b)= g_{+}(a^1,b^1)  \cdots g_{+}(a^{\alpha-1}, b^{\alpha -1}) a^\alpha, \label{Eq:MainA}\\
& B^\alpha(a,b) = b^\alpha g^{-1}_-(a^{\alpha-1}, b^{\alpha -1})\cdots    g^{-1}_-(a^1,b^1). \label{Eq:MainB}
\end{align}
\end{subequations}
Then the map $m$  is a  local, holomorphic Poisson diffeomorphism from the $d$-fold product
Poisson space $S(n,1) \times \cdots \times S(n,1)$
to $S(n,d)$, where $S(n,1)$ and $S(n,d)$ are equipped with the relevant Poisson brackets of the form \eqref{I4}.
\end{thm}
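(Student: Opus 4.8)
The plan is to verify directly that $m$ pulls back the Poisson bracket \eqref{I4} on $S(n,d)$ to the product bracket on the $d$-fold product of $S(n,1)$'s, and to check separately that $m$ is a local holomorphic diffeomorphism near $0$. The diffeomorphism property is the easy part: the formulae \eqref{Eq:MainA}--\eqref{Eq:MainB} are triangular in $\alpha$ (the $\alpha$-th column/row depends only on $(a^\beta,b^\beta)$ with $\beta\le\alpha$), with $A^1=a^1,\,B^1=b^1$ and, at $a=b=0$, each $g_\pm=I^n$, so the Jacobian at the origin is the identity; holomorphy is manifest since $G_j$, the square roots, and the matrix entries are holomorphic near $0$. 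So the bulk of the work is the bracket computation.

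For the bracket computation I would proceed inductively on $d$, or equivalently column-by-column. The key structural input is that the locally defined map $(g_+,g_-):S(n,1)\to\Gn^*$ of \eqref{gpm}--\eqref{Eq:gminus} should intertwine the Poisson bracket \eqref{I4} on $S(n,1)$ with the dual Poisson--Lie structure on $\Gn^*$ inside the Drinfeld double $D(n)$; this is presumably established (or is a short consequence of) the Poisson--Lie covariance statements \eqref{I7}--\eqref{I8} together with a moment-map-type property. Granting this, the factors $g_+(a^1,b^1)\cdots g_+(a^{\alpha-1},b^{\alpha-1})$ acting on the left of $a^\alpha$, and the mirror product acting on the right of $b^\alpha$, are exactly a dressing/gauge action of $\Gn^*$, and the Poisson--Lie property \eqref{I7} (with its $\Gd$-analogue \eqref{I8}) guarantees that conjugating $(a^\alpha,b^\alpha)$ by such group-valued factors produces precisely the cross terms $r^nA_1A_2$, $A_1A_2r^d$, $B_2r_+^nA_1$, etc., in \eqref{I4}. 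Concretely, I would compute $\{A^\alpha_1,A^\beta_2\}$, $\{B^\alpha_1,B^\beta_2\}$ and $\{A^\alpha_1,B^\beta_2\}$ for all pairs $\alpha\le\beta$, splitting each into (i) the ``internal'' contribution from $\{a^\alpha,b^\alpha\}$-type brackets, which reproduces the $S(n,d)$ bracket restricted to block $\alpha=\beta$ after the left/right group factors are commuted through using the Poisson--Lie relations \eqref{I6}, and (ii) the ``external'' contribution from the brackets of the group-valued prefactors among themselves and with $a^\beta,b^\beta$, which must assemble into the off-diagonal $d$-index pieces $A_1A_2r^d$, $r^dB_1B_2$, $A_1r_+^dB_2$, $C_{12}^{n\times d}$. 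The constant term $C_{12}^{n\times d}$ should appear only in the $\alpha=\beta$ sector (from the single copy $S(n,1)$), while for $\alpha\ne\beta$ the cross-brackets must reproduce the purely-$r^d$ terms without any constant — a useful consistency check.

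The main obstacle I anticipate is the bookkeeping in step (ii): one has to control brackets of a product $g_+(a^1,b^1)\cdots g_+(a^{\alpha-1},b^{\alpha-1})$ with another such product and with $a^\beta$, and show that, after using \eqref{I6}, the multiple $r$-matrix insertions telescope into the single $r^d$ (or $r_+^d$) acting at the right slot in \eqref{I4}. This requires knowing not just that $(g_+,g_-)$ is Poisson into $\Gn^*$, but the precise mixed brackets $\{(g_+)_1,(a^\alpha)_2\}$, $\{(g_+)_1,(g_-)_2\}$, which encode how $\Gn^*$ dresses $S(n,1)$; deriving these cleanly (perhaps by recognizing $(g_+,g_-)$ as the Poisson--Lie moment map for the $\Gn$-action \eqref{I7}, so that its brackets with any function are governed by the infinitesimal dressing action) is the crux. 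Once those mixed brackets are in hand, the rest is a disciplined but routine tensor-notation manipulation, organized by the triangular structure in $\alpha$, and the induction on $d$ closes because adjoining the $d$-th copy only adds the already-understood interaction of $(a^d,b^d)$ with the previously-built $g_\pm$-string.
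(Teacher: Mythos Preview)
Your proposal is correct and follows essentially the same route as the paper: the paper also identifies $(g_+,g_-)$ as the Poisson--Lie moment map for the $\Gn$-action (yielding exactly the mixed brackets $\{a_1^\alpha,(g_{\pm,\beta})_2\}$ and $\{(g_{+,\alpha})_1,(g_{-,\beta})_2\}$ you anticipate), packages these into brackets for the products $h_\pm^{\alpha}:=g_{\pm,1}\cdots g_{\pm,\alpha}$, and then computes $\{A_1^\alpha,A_2^\beta\}$, $\{A_1^\alpha,B_2^\beta\}$ case by case in $\alpha\lessgtr\beta$. One point your outline slightly mislocates: in the diagonal case $\alpha=\beta$ of $\{A_1^\alpha,B_2^\alpha\}$, the $r_+^d$-contribution $\sum_{\mu<\alpha}A_1^\mu B_2^\mu$ does not arise from external brackets of the prefactors (those vanish when $\alpha=\beta$) but from the identity $(h_+^{\alpha-1})_1\, C_{12}^{n\times 1}\,(h_-^{\alpha-1})_2=C_{12}^{n\times 1}+\sum_{\mu<\alpha}A_1^\mu B_2^\mu$, which is where the factorization \eqref{factid1} enters the proof.
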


The fundamental property of the map $(g_+, g_-)$ \eqref{gpm} is
the factorization identity
\be
\1_n + a^1 b^1 = g_+(a^1,b^1) g_-(a^1, b^1)^{-1}.
\label{factid1}\ee
Introducing
\be
\cG_\pm(a,b):= g_\pm(a^1,b^1) \cdots g_\pm(a^d, b^d),
\label{cGpmdef}\ee
the identity \eqref{factid1}  and formulae of Theorem \ref{Thm:Main} imply the further identity
\be
\1_n + A(a,b) B(a,b) = \cG_+(a,b) \cG_-(a,b)^{-1}.
\label{factid2}\ee
These properties, which are easily verified, actually motivated our construction.
Their meaning will be enlightened in Section \ref{Sec:Cov} (see Remark \ref{rem:factid}) utilizing the theory of Poisson--Lie moment maps.

\medskip
We can also give an analogous realization of the Poisson bracket \eqref{I4} on $S(n,d)$ in terms of $n$ copies
of the Poisson bracket on $S(1,d)$.
Such a map can be obtained by combining  Theorem \ref{Thm:Main} with the swap map $\nu$ from $S(n,d)$ to $S(d,n)$
that operates according to
\be
\nu: (A,B) \mapsto (\eta^d B \eta^n, \eta^n A \eta^d),
\ee
where for any $\ell\in \N$ we let $\eta^\ell:=\sum_{i=1}^\ell E_{i, \ell + 1 - i}(\ell)$.
It is easily seen that
\be
\nu: (S(n,d),\{\ ,\ \}_\kappa) \to ( S(d,n), \{\ ,\ \}_{-\kappa})
\ee
is a Poisson diffeomorphism.

In addition, we shall present decoupling results for the `oscillator Poisson brackets'
of Arutyunov--Olivucci \cite{AO}, who introduced two Poisson structures
on $S(n,d)$. Denoting the elements of $S(n,d)$ now as pairs $(\cA, \cB)$, one of their
Poisson structures, called $\{\ ,\ \}_\kappa^+$, is given by
\bea
  &&\br{\cA_1,\cA_2}_{\kappa}^+= \kappa (r^n \cA_1 \cA_2 - \cA_1 \cA_2 r^d)\,,\nonumber \\
  &&\br{\cB_1,\cB_2}_{\kappa}^+=\kappa (\cB_1 \cB_2 r^n -r^d \cB_1 \cB_2) \,, \label{+PB}\\
  &&\br{\cA_1,\cB_2}_{\kappa}^+=\kappa (-\cB_2 r^n_+ \cA_1 + \cA_1 r^d_{-}\cB_2)-C_{12}^{n\times d}\,. \nonumber
\eea
Their other Poisson bracket, called $\{\ ,\ \}_\kappa^-$, is obtained from this one by replacing
$(\cA, \cB)$ by $(\cA \eta^d, \eta^d \cB)$ in the above formula.
In fact, we have two different decoupling result for the Poisson bracket $\{\ ,\ \}_\kappa^+$.
The first one is obtained by combining Theorem \ref{Thm:Main} with the following simple lemma.

\begin{lem} \label{lem:lemI2}
Let $\xi_A$ and $\xi_B$ be arbitrary constants for which $\xi_A \xi_B = -\frac{1}{\kappa}$.
Then the map
\be
\xi: (A, B) \mapsto (\cA, \cB):= ( \xi_A A \eta^d, \xi_B \eta^d B)
\ee
is a Poisson diffeomorphism from
$(S(n,d), \{\ ,\ \}_\kappa)$ to
$(S(n,d), \{\ ,\ \}_{-\kappa}^+)$.
\end{lem}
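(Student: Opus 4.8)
The plan is to verify directly that the linear change of variables $\xi$ transforms the Poisson brackets \eqref{I4} into the brackets \eqref{+PB} with $\kappa$ replaced by $-\kappa$. Since $\xi$ is clearly a holomorphic diffeomorphism of $S(n,d)$ (it is invertible and linear, as $\xi_A, \xi_B \in \C^*$ because $\xi_A\xi_B = -1/\kappa \neq 0$), the only content is the compatibility of the brackets. So I would substitute $\cA = \xi_A A\eta^d$ and $\cB = \xi_B \eta^d B$ into each of the three relations of \eqref{+PB}, use bilinearity and the Leibniz rule (the brackets are quadratic-plus-constant, but the substitution is linear, so this is just rescaling and conjugation by the fixed matrix $\eta^d$), and check that one recovers \eqref{I4}.

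The key computational inputs are the behaviour of the $r$-matrices and of $C_{12}^{n\times d}$ under conjugation by $\eta^d$ in the appropriate tensor legs. First I would record that $(\eta^\ell)^2 = \1_\ell$ and that conjugating $r^\ell$ by $\eta^\ell\otimes\eta^\ell$ reverses the order of the summation indices, giving
\begin{equation}
(\eta^\ell\otimes\eta^\ell)\, r^\ell\, (\eta^\ell\otimes\eta^\ell) = -r^\ell - \tfrac{1}{2}(I^\ell - \1_\ell\otimes\1_\ell) \cdot 0,
\end{equation}
that is, more usefully, $(\eta^\ell\otimes\eta^\ell)\, r^\ell_{\pm}\, (\eta^\ell\otimes\eta^\ell) = -r^\ell_{\mp}$, since $E_{jk}(\ell)\wedge E_{kj}(\ell) \mapsto E_{\ell+1-j,\ell+1-k}\wedge E_{\ell+1-k,\ell+1-j}$ swaps the role of $j<k$ with $j>k$, while $I^\ell$ is invariant. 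Then, for the mixed bracket, I would check the identity $(\1_n\otimes\eta^d)\, C_{12}^{n\times d}\, (\eta^d \otimes \1_n) = C_{12}^{n\times d}$ coming from $E_{i\alpha}^{n\times d}\eta^d \otimes \eta^d E_{\alpha i}^{d\times n} = E_{i,\eta(\alpha)}^{n\times d}\otimes E_{\eta(\alpha),i}^{d\times n}$ summed over $\alpha$, with $\eta(\alpha)=d+1-\alpha$. Feeding these in, the $r^n$ terms are untouched (the $\eta^d$ only acts on the $d$-side), the $r^d$ terms pick up a sign and swap $r^d_+ \leftrightarrow r^d_-$, and the constant term transforms by the scalar $\xi_A\xi_B = -1/\kappa$, which converts $+\kappa\, C_{12}^{n\times d}$ in \eqref{I4} into $-C_{12}^{n\times d}$ in \eqref{+PB}; tracking the overall $\kappa \to -\kappa$ and the signs then matches all three lines.

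The main obstacle is purely bookkeeping: keeping the tensor-leg placement of each $\eta^d$ straight (which legs are $n$-type and which are $d$-type differs between $\{\cA_1,\cA_2\}$, $\{\cB_1,\cB_2\}$, and $\{\cA_1,\cB_2\}$), and making sure the sign flips from $r^d\mapsto -r^d$, from $\kappa\mapsto-\kappa$, and from $\xi_A\xi_B<0$ combine correctly rather than doubling up or cancelling. I would organise this by writing $\cA_1\cA_2 = \xi_A^2 (\eta^d)_{(2)}(\eta^d)_{(1)} A_1 A_2 (\eta^d)_{(1)}(\eta^d)_{(2)}$ type expressions with explicit subscripts for the tensor factors, and similarly for $\cB$, so that each conjugation is visibly of the form $(\eta^d\otimes\eta^d)(\,\cdot\,)(\eta^d\otimes\eta^d)$ and the two recorded identities apply verbatim. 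No deeper idea is needed; once the dictionary $r^d_\pm \leftrightarrow -r^d_\mp$, $C_{12}^{n\times d}$-invariance, and the scalar $\xi_A\xi_B=-1/\kappa$ are in place, the three identities of \eqref{+PB} drop out of \eqref{I4}.
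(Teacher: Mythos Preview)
Your approach is correct and is precisely what the paper intends: it states Lemma~\ref{lem:lemI2} as a ``simple lemma'' in the introduction and gives no proof, leaving the direct verification to the reader. The identities you identify---$(\eta^\ell\otimes\eta^\ell)\,r^\ell\,(\eta^\ell\otimes\eta^\ell)=-r^\ell$ (hence $r^\ell_\pm\mapsto -r^\ell_\mp$), the invariance of $C_{12}^{n\times d}$ under the relevant $\eta^d$-action, and the scalar $\xi_A\xi_B=-1/\kappa$---are exactly the ingredients needed, and your bookkeeping plan is sound.
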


An alternative decoupling map from $(S(n,1), \{\ ,\ \}_\kappa)^{\times d}$  to $(S(n,d), \{\ ,\ \}_{\kappa}^+)$
will be presented in Section \ref{Sec:X}.

\begin{rem}
It is known that the brackets $\{\ ,\ \}_\kappa$ and $\{\ ,\ \}_\kappa^+$ satisfy
the Jacobi identity, but the interested reader can also check this by
routine calculation.
\end{rem}

\section{Basic facts about Poisson--Lie groups} \label{Sec:PL}

We will recall the embedding of the Poisson--Lie group $\Gl$ and its dual into their Drinfeld double $D(\ell)$.
Then we will present the  notion of the Poisson--Lie moment map.
We do not give proofs here, since the relevant statements can be found in many  reviews \cite{A,KS,STS1,STS2}.

Let us consider the complex Lie group $D(\ell)$ \eqref{Dl}
and equip its Lie algebra,
\be
\cD(\ell) := \gl \oplus \gl,
\label{cDl}\ee
with the non-degenerate, invariant bilinear form
\be
\langle (U,V), (X,Y) \rangle_\kappa:= \frac{1}{\kappa} \left(\tr(UX) - \tr (VY)\right),
\label{pairing}\ee
using a constant $\kappa \in \C^*$.
Let us also introduce the triangular decomposition
\be
\gl = \gl_> + \gl_0 + \gl_<,
\ee
where $\gl_0$ is the set of diagonal matrices, while
$\gl_>$ (resp. $\gl_<$) contains the upper (resp. lower) triangular matrices with zero diagonal.
Then $\cD(\ell)$ can be represented as the vector space direct sum of the isotropic subalgebras
\be
\gl_\delta := \{ (X,X) \mid X\in \gl\}
\label{gldelta}\ee
and
\be
\gl_\delta^*:= \{ (Y_> + Y_0, Y_< - Y_0)\mid Y_>\in \gl_>,\,  Y_< \in \gl_<,\, Y_0\in \gl_0\}.
\ee
In \eqref{gldelta} the subscript $\delta$ indicates that $\gl_\delta$ is the diagonal embedding of
$\gl$ into $\cD(\ell)$ \eqref{cDl}.
We may identify $\gl$ with the diagonal subalgebra $\gl_\delta$, and  identify
its linear dual space with the subalgebra $\gl^*_\delta$.
We also let $\Gl_\delta$ and $\Gl_\delta^*$ denote the subgroups of $D(\ell)$ corresponding to the
subalgebras in the decomposition
\be
\cD(\ell) = \gl_\delta + \gl_\delta^*.
\ee

The group $D(\ell)$ carries a natural multiplicative Poisson structure.
To describe it, let us take arbitrary bases $T^a$  of $\gl_\delta$ and $T_a$ of $\gl_\delta^*$ that are
in duality with respect to the pairing \eqref{pairing}.  The Poisson bracket of two holomorphic functions $\cF$ and $\cH$ on $D(\ell)$
is given by
\be
\{\cF, \cH\}_{D}^\kappa :=\sum_{a=1}^{n^2} \left(   (\nabla_{T^a} \cF) (\nabla_{T_a} \cH) -(\nabla'_{T^a} \cF) (\nabla'_{T_a} \cH) \right),
\ee
where for any $T\in \cD(\ell)$ we have
\be
(\nabla_T \cF)(p) =  \dz \cF(e^{z T} p),
\quad
(\nabla_T' \cF)(p) =  \dz \cF(p e^{z T}),
\quad \forall p \in D(\ell).
\label{PBdoub}\ee
It is well-known that  $\Gl_\delta$ and $\Gl_\delta^*$ are Poisson submanifolds of $D(\ell)$, and we equip
them with the inherited Poisson structures.

The above Poisson structures can be conveniently presented in terms of the functions
given by the matrix elements on the respective groups.
Denoting the elements of $D(\ell)$ as pairs $(u,v)$, and employing the tensorial notation of the Faddeev school, one has
\be
\{ u_1, u_2\}_{D}^\kappa = \kappa [ u_1 u_2, r^\ell],
\,\,
\{ v_1, v_2\}_{D}^\kappa = \kappa [ v_1 v_2, r^\ell],
\,\,
\{ u_1, v_2\}_{D}^\kappa = \kappa [ u_1 v_2, r_+^\ell],
\label{DlPB}\ee
using the $r$-matrices \eqref{I2} and \eqref{I3}.
On the subgroup $\Gl_\delta$ with elements denoted $(g,g)$, this reduces to the bracket \eqref{I6}.
The group $\Gl_\delta^*$ consists of the pairs $(h_+, h_-)\in D(\ell)$ for which $h_+$ (resp. $h_-$) is upper triangular (resp. lower triangular)
and the diagonal part of $h_+$ is the inverse of the diagonal part of $h_-$.
Restriction from $D(\ell)$ gives the following Poisson bracket on this dual group:
\be
\{ h_{\pm,1}, h_{\pm,2}\}_*^\kappa = \kappa [ h_{\pm,1} h_{\pm,2}, r^\ell],\,\,
\{ h_{+,1}, h_{-,2}\}_*^\kappa = \kappa [ h_{+,1} h_{-,2}, r_+^\ell].
\label{starPB}\ee
We stress that $D(\ell)$, $\Gl\equiv \Gl_\delta$ and $\Gl^* := \Gl_\delta^*$  with the above Poisson brackets are Poisson--Lie groups.
This means, for example, that the group product $D(\ell) \times D(\ell) \to D(\ell)$ is a Poisson map.

Let us briefly explain  how \eqref{DlPB} follows from \eqref{PBdoub}. For $T=(X,Y)\in \cD(\ell)$, the derivatives of the matrix elements are
\be
\nabla_T u_{ij} = (Xu)_{ij}, \quad \nabla_T' u_{ij} = (u X)_{ij},\quad
\nabla_T v_{ij} = (Yv)_{ij}, \quad \nabla_T' v_{ij} = (v Y)_{ij} .
\ee
For any dual bases $T^a = (X^a, X^a)$ and $T_a = (Z_a, W_a)$, one can calculate that
\be
X^a \otimes Z_a = - \kappa r_-^\ell \quad\hbox{and}\quad X^a \otimes W_a  = - \kappa r_+^\ell.
\label{rid}\ee
By using these relations, one readily obtains \eqref{DlPB} from \eqref{PBdoub}.

There is an important mapping of $\Gl^*$ onto $\Gl$, which is given by
\be \label{chi}
\chi: (h_+, h_-) \mapsto h:=h_+ h_-^{-1}.
\ee
This mapping is $2^n$ to $1$, since the image does not change if we replace $(h_+, h_-)$ by $(h_+\tau, h_- \tau)$ for any diagonal matrix $\tau$ whose entries
are taken from the set $\{ +1, -1\}$. The  map $\chi$ yields a holomorphic
diffeomorphism\footnote{\label{Foot1}The multi-valued inverse of the map $\chi$ is closely related to the Gauss decomposition \cite[\S6.1]{NS} of regular matrices.
If we Gauss decompose any $g\in \Gl$ as $g = g_> g_0 g_<$,  and have $g_0 = h_0^2$ for the diagonal constituent, then
$g= h_+ h_-^{-1}$ with $h_+ =g_> h_0$ and $h_-^{-1} = h_0 g_<$, so that $(h_+, h_-)\in \Gl^*$.  The subtlety is that we have to take
the square root of the diagonal matrix $g_0$.}
between respective neighborhoods of the identity elements.
Moreover, it is a Poisson map with respect to the so-called Semenov-Tian-Shansky Poisson structure \cite{A,STS1}  on $\Gl$:
\be
\{ h_1, h_2\}_{\mathrm{STS}}^\kappa =  \kappa \left(  h_1 r_-^\ell h_2 + h_2 r_+^\ell h_1 - h_1 h_2 r^\ell - r^\ell h_1 h_2 \right).
\label{STSPB}\ee
With this Poisson bracket, $\Gl$ can serve, at least locally, as a model of the dual Poisson--Lie group $\Gl^*$.
We note in passing that this quadratic Poisson bracket naturally extends to a Poisson structure on $\gl$, which is
compatible with its linear Lie--Poisson  bracket.

We now recall \cite{Lu} what is meant by a moment map for a Poisson action of $\Gl$. Suppose that $\Gl$ acts on a holomorphic
Poisson manifold $(\cP, \{\ ,\ \}_\cP)$ in such a way that the action map, $\Gl\times \cP \to \cP$,
is Poisson, where the product Poisson structure on $\Gl \times \cP$ is built from the bracket
$\{\ ,\ \}_G^\kappa$  \eqref{I6} on $\Gl$ and $\{\ ,\ \}_\cP$ on $\cP$.
For any $X\in \gl$, let $X_\cP$ be the vector field on $\cP$ given by the flow of $\exp(t X)$.
We can take the derivative $\cL_{X_\cP} \cF$ of any holomorphic function on $\cP$.
We then say that a holomorphic map $(\phi_+, \phi_-): \cP \to \Gl_\delta^*$ is the (Poisson--Lie) moment map for the action
if it satisfies the following two conditions.
First, we must have the equality
\be
\cL_{X_\cP} \cF = \langle (X,X), \{ \cF, (\phi_+, \phi_-)\}_\cP (\phi_+, \phi_-)^{-1} \rangle_\kappa
\label{momcond}\ee
for all $X$ and $\cF$. Second, we also require that $(\phi_+, \phi_-)$ is  a Poisson map with respect
to the bracket \eqref{starPB} on the dual group.
This second condition is equivalent to the requirement that the map
\be
\phi:= \phi_+ \phi_-^{-1}: \cP \to \Gl
\ee
is Poisson with respect to the Semenov-Tian-Shansky bracket \eqref{STSPB} on $\Gl$.
Indeed, the Semenov-Tian-Shansky bracket is just the push-forward of the Poisson bracket \eqref{starPB} on the
dual group $\Gl^*$.
The first condition can also be recast in terms of the map $\phi$, as we shall see in our concrete example in the next section.

\section{Covariance properties of the Poisson structure (\ref{I4})} \label{Sec:Cov}

We now characterize the behavior of
the Poisson bracket \eqref{I4} on $S(n,d)$ under the natural left-action of $\Gn$.
Throughout this section, $d\geq 1$ and $n\geq 2$, otherwise they are arbitrary.
The statements presented below can also be obtained
as consequences of known \cite{AO,FFM}  analogous properties of the Arutyunov--Olivucci Poisson bracket \eqref{+PB}.
We sketch the  proofs in order to make this paper basically self-contained.

\begin{prop} \label{pr:Cov1}
Consider the natural action of $\Gn$ on $S(n,d)$,
defined by
\be
g\cdot (A,B):= (gA, B g^{-1}), \quad \forall g\in \Gn,\,\, (A,B)\in S(n,d).
\label{H1}\ee
With respect to the Poisson structures \eqref{I4} and \eqref{I6},
this is a Poisson action.
\end{prop}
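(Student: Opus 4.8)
The plan is to verify the Poisson-action property directly in the tensorial notation, i.e.\ to check that the action map $\mu\colon \Gn\times S(n,d)\to S(n,d)$, $(g,A,B)\mapsto(gA,Bg^{-1})$, carries the product Poisson bracket built from $\{\ ,\ \}_G^\kappa$ on $\Gn$ (see \eqref{I6}) and $\{\ ,\ \}_\kappa$ on $S(n,d)$ (see \eqref{I4}) to $\{\ ,\ \}_\kappa$ on the target. Concretely, I would treat $g$ and $(A,B)$ as ``independent'' coordinates on the source, so that $\{g_1,A_2\}=\{g_1,B_2\}=0$ there, and compute the brackets of the pulled-back functions $\tilde A:=gA$ and $\tilde B:=Bg^{-1}$ using bilinearity and the Leibniz rule. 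It suffices to check the three relations
\begin{align}
&\{\tilde A_1,\tilde A_2\}\stackrel{?}{=}-\kappa\,(r^n\tilde A_1\tilde A_2+\tilde A_1\tilde A_2 r^d),\nonumber\\
&\{\tilde B_1,\tilde B_2\}\stackrel{?}{=}-\kappa\,(\tilde B_1\tilde B_2 r^n+r^d\tilde B_1\tilde B_2),\nonumber\\
&\{\tilde A_1,\tilde B_2\}\stackrel{?}{=}\kappa\,(\tilde B_2 r^n_+\tilde A_1+\tilde A_1 r^d_+\tilde B_2+C^{n\times d}_{12}).\nonumber
\end{align}

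The key computational steps are as follows. First, expanding $\{\tilde A_1,\tilde A_2\}=\{g_1A_1,g_2A_2\}$ produces one term from $\{g_1,g_2\}A_1A_2=\kappa[g_1g_2,r^n]A_1A_2$ and one from $g_1g_2\{A_1,A_2\}=-\kappa\, g_1g_2(r^nA_1A_2+A_1A_2r^d)$; here the cross terms vanish because $g$ and $A$ are independent on the source. One then has to show that the $g_1g_2 r^n A_1A_2$ piece coming from $[g_1g_2,r^n]A_1A_2$ cancels against the $-g_1g_2 r^n A_1A_2$ piece — it does not literally, so the real point is to commute $r^n$ past $g_1g_2$ and recognize $r^n g_1 g_2 A_1A_2 = r^n \tilde A_1\tilde A_2$, leaving exactly $-\kappa(r^n\tilde A_1\tilde A_2+\tilde A_1\tilde A_2 r^d)$, since the $-g_1g_2r^n A_1A_2$ term is $-\tilde A_1\tilde A_2$-type only after noting $A_1A_2 r^d = \tilde A_1 \tilde A_2 r^d$ as $g$ commutes with the ``$d$'' tensor factor. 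The $\{\tilde B_1,\tilde B_2\}$ identity is handled the same way using $\{g_1^{-1},g_2^{-1}\}_G^\kappa=\kappa[g_1^{-1}g_2^{-1},r^n]$ (obtained by differentiating $g g^{-1}=\1$, i.e.\ $\{g_1^{-1},g_2^{-1}\}=-g_1^{-1}g_2^{-1}\{g_1,g_2\}g_1^{-1}g_2^{-1}=\kappa[g_1^{-1}g_2^{-1},r^n]$). For the mixed bracket, $\{\tilde A_1,\tilde B_2\}=\{g_1A_1,B_2g_2^{-1}\}$ splits into $\{g_1,g_2^{-1}\}A_1B_2$ and $g_1\{A_1,B_2\}g_2^{-1}$; the first uses $\{g_1,g_2^{-1}\}=-g_1\{g_1,g_2\}g_2^{-1}\cdots$, more precisely $\{g_1,g_2^{-1}\}_G^\kappa=-g_2^{-1}\{g_1,g_2\}_G^\kappa g_2^{-1}=\kappa(r^n g_1 g_2^{-1}-g_1 g_2^{-1}r^n_{\text{conj}})$-type, and one has to reassemble everything — in particular the constant term $C^{n\times d}_{12}$, which is $\Gn$-invariant in the sense $g_1 g_2^{-1} C^{n\times d}_{12}=C^{n\times d}_{12}$ (since $C^{n\times d}_{12}$ intertwines the $n$-factor on the left with a contraction) — into the claimed right-hand side written in terms of $\tilde A,\tilde B$.

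The main obstacle will be the mixed bracket $\{\tilde A_1,\tilde B_2\}$: one must carefully use the defining identity $r^n_+ = r^n + \tfrac12 I^n$, commute $g_1,g_2^{-1}$ through the various $r$-matrices (they do not commute with $r^n$, only the combination is controlled), and verify that the noncommutative remainders conspire with the $\{g_1,g_2^{-1}\}$-contribution to reproduce precisely $\kappa(\tilde B_2 r^n_+\tilde A_1+\tilde A_1 r^d_+\tilde B_2+C^{n\times d}_{12})$; the $r^d$-terms pass through untouched because $g$ acts trivially on the $\Mat_{d\times n}$ index. I would organize this by collecting, on both sides, the coefficient of $g_1 g_2^{-1}(\cdots)$ and separately the terms where an $r^n$ sits to the left of $g_1 g_2^{-1}$, using the elementary identity $[g_1g_2, r^n]=g_1g_2 r^n - r^n g_1 g_2$ repeatedly; the bookkeeping is routine but delicate. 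Since the analogous covariance statements for the Arutyunov--Olivucci bracket \eqref{+PB} are already known \cite{AO,FFM} and the two brackets are related by the linear change of variables of Lemma \ref{lem:lemI2}, an alternative (and shorter) route is simply to transport the known result through $\xi$; I would mention this but present the direct check for self-containedness, as the section's preamble indicates.
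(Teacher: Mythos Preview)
Your approach is essentially the same as the paper's: a direct Leibniz-rule verification in tensorial notation, with the mixed bracket handled via the invariance of $C_{12}^{n\times d}$ under the $\Gn$-action. The only correction is that the identity you need is $g_1\,C_{12}^{n\times d} = C_{12}^{n\times d}\,g_2$ (equivalently $g_1\,C_{12}^{n\times d}\,g_2^{-1}=C_{12}^{n\times d}$, since $g_2^{-1}$ acts on the right of the $d\times n$ factor), not $g_1 g_2^{-1} C_{12}^{n\times d}=C_{12}^{n\times d}$ as you wrote; with this fix the bookkeeping is exactly the paper's short computation.
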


\begin{proof}
This is very easy and goes as follows.
We can calculate $\{ g_1 A_1, g_2 A_2\}$ using the product Poisson structure defined by
combining \eqref{I4} and \eqref{I6}. This gives
\bea
\{ g_1 A_1, g_2 A_2\} &=& \{ g_1, g_2\}^\kappa_G A_1 A_2 + g_1 g_2 \{ A_1, A_2\}_\kappa \nonumber\\
\qquad &=& -\kappa (r^n g_1A_1 g_2 A_2 + g_1A_1 g_2A_2 r^d),
\eea
which agrees with the Poisson bracket on $S(n,d)$.
The next line of \eqref{I4} is handled in the same way.
Finally, one needs to show that
\be
\{ g_1 A_1, B_2 g_2^{-1}\} =
\kappa (B_2g_2^{-1} r_+^n g_1A_1 + g_1A_1 r_{+}^dB_2g_2^{-1} +  {C}_{12}^{n\times d}).
\ee
We refrain from spelling this out, but note that
the direct verification of this  equality relies on the identity
$g_1 C_{12}^{n\times d} = C_{12}^{n\times d} g_2$.
\end{proof}

\begin{prop} \label{pr:Cov2}
 Suppose that $(\phi_+, \phi_-): S(n,d) \to \Gn^*$ is a (possibly only locally defined) moment map
for the Poisson action \eqref{H1}. Then the condition \eqref{momcond} is equivalent to the equalities
\be
 \{ A_1, \phi_{\pm,2}\}_\kappa = - \kappa r_\mp^n A_1 \phi_{\pm,2}
 \quad\hbox{and}\quad
\{ B_1, \phi_{\pm,2}\}_\kappa = \kappa B_1 r_\mp^n  \phi_{\pm,2},
\label{mom1}\ee
where the usual tensorial notation is employed.
For $\phi:= \phi_+ \phi_-^{-1}$, these relations imply
\be
\{ A_1, \phi_2\}_\kappa = \kappa (\phi_2 r_+^n - r_-^n \phi_2) A_1
\quad\hbox{and}\quad
\{ B_1, \phi_2\}_\kappa = \kappa B_1 (r_-^n \phi_2 - \phi_2 r_+^n).
\label{mom2}\ee
\end{prop}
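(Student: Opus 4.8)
The plan is to verify the moment map condition \eqref{momcond} directly by testing it on the coordinate functions $\cF = A_{ij}$ and $\cF = B_{ij}$, and then to derive \eqref{mom2} from \eqref{mom1} by a straightforward calculation with $\phi = \phi_+\phi_-^{-1}$. First I would compute the left-hand side of \eqref{momcond}: since the action \eqref{H1} is linear, the infinitesimal generator $X_\cP$ of $X \in \gn$ acts by $\cL_{X_\cP} A = XA$ and $\cL_{X_\cP} B = -BX$, so $\cL_{X_\cP} A_{ij} = (XA)_{ij}$ and $\cL_{X_\cP} B_{ij} = -(BX)_{ij}$. These should be matched against the right-hand side of \eqref{momcond}, namely $\langle (X,X), \{\cF, (\phi_+,\phi_-)\}_\kappa (\phi_+,\phi_-)^{-1}\rangle_\kappa$, where the pairing \eqref{pairing} is $\langle (U,V),(X,Y)\rangle_\kappa = \frac1\kappa(\tr(UX) - \tr(VY))$.

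The key step is to show that \eqref{mom1} is exactly the condition that makes this matching work. Writing $\{A_1, \phi_{\pm,2}\}_\kappa \phi_{\pm,2}^{-1}$ in tensorial notation and taking the appropriate trace against $X$ in the second tensor slot, the relations \eqref{mom1} give, for the pair $(\phi_+,\phi_-)$,
\begin{equation}
\langle (X,X), \{A_1, (\phi_+,\phi_-)_2\}_\kappa (\phi_+,\phi_-)_2^{-1}\rangle_\kappa
= -\tr_2\big( (r_-^n + r_+^n)\, A_1\, X_2 \big) = -\tr_2\big(I^n A_1 X_2\big),
\end{equation}
since $r_-^n + r_+^n = 2r^n$ wait --- more precisely $r_-^\ell + r_+^\ell$ is not $I^\ell$; rather the contribution of the $+$ component comes with $-r_-^n$ and that of the $-$ component with $-r_+^n$ (note the sign flip in \eqref{pairing} for the second slot), and $r_+^n - r_-^n = I^n$. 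Carrying out the contraction $\tr_2(I^n A_1 X_2) = (XA)_1$ by the defining property of $I^n$ then yields $\cL_{X_\cP} A_1 = (XA)_1$, as required; the $B$-equation is checked the same way, the extra sign being absorbed by the $B_1 r_\mp^n \phi_{\pm,2}$ versus $-r_\mp^n A_1 \phi_{\pm,2}$ asymmetry in \eqref{mom1}. Conversely, since these trace identities must hold for all $X \in \gn$ and the pairing is non-degenerate, \eqref{momcond} forces \eqref{mom1}. I expect the main obstacle to be pure bookkeeping: getting the four sign conventions consistent (the sign in the pairing's second slot, the sign in the action on $B$, the $\mp$ pattern in \eqref{mom1}, and the difference $r_+^n - r_-^n = I^n$), so that the $r$-matrix pieces cancel and only the $I^n$ term survives to reproduce the linear action.

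For the second assertion, I would simply differentiate the product $\phi = \phi_+\phi_-^{-1}$. Using the Leibniz rule for the Poisson bracket and $\{A_1, \phi_{-,2}^{-1}\}_\kappa = -\phi_{-,2}^{-1}\{A_1,\phi_{-,2}\}_\kappa \phi_{-,2}^{-1}$, one gets
\begin{equation}
\{A_1, \phi_2\}_\kappa = \{A_1,\phi_{+,2}\}_\kappa \phi_{-,2}^{-1} + \phi_{+,2}\{A_1,\phi_{-,2}^{-1}\}_\kappa
= -\kappa r_-^n A_1 \phi_{+,2}\phi_{-,2}^{-1} + \kappa \phi_{+,2}\phi_{-,2}^{-1} r_+^n A_1,
\end{equation}
where in the last term I used \eqref{mom1} for $\phi_-$ and the fact that $\phi_{+,2}$ commutes with $A_1$ (different tensor slots, and $A$ carries no index in the second slot). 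Since $\phi_{+,2}\phi_{-,2}^{-1} = \phi_2$, this is precisely $\{A_1,\phi_2\}_\kappa = \kappa(\phi_2 r_+^n - r_-^n \phi_2)A_1$. The computation for $B$ is identical in structure, using the $B$-half of \eqref{mom1}, and produces $\{B_1,\phi_2\}_\kappa = \kappa B_1(r_-^n\phi_2 - \phi_2 r_+^n)$. This part is routine once \eqref{mom1} is in hand; the only care needed is the standard sign in differentiating an inverse and keeping the tensor-leg placement straight.
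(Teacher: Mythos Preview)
Your proposal is correct and follows essentially the same route as the paper: test \eqref{momcond} on the coordinate functions $A_{i\alpha}$, $B_{\alpha i}$, use that the infinitesimal action is $A\mapsto XA$, $B\mapsto -BX$, reduce via $r_+^n-r_-^n=I^n$ and the contraction $\tr_2(I^n A_1 X_2)=(XA)_1$, and then obtain \eqref{mom2} from \eqref{mom1} by the Leibniz rule on $\phi=\phi_+\phi_-^{-1}$. The paper organises the first step slightly differently, introducing dual bases $T^a=(X^a,X^a)$ of $\gn_\delta$ and $T_a=(Z_a,W_a)$ of $\gn^*_\delta$ and invoking the precomputed identities $X^a\otimes Z_a=-\kappa r_-^n$, $X^a\otimes W_a=-\kappa r_+^n$ (equation \eqref{rid}); your direct trace computation is the same argument unwound.

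One point in your converse direction deserves a word of care. Saying ``the pairing is non-degenerate'' is not quite enough: for fixed $\cF$, the condition \eqref{momcond} for all $X\in\gn$ only sees $\langle (X,X),\,\cdot\,\rangle_\kappa$, which on a general element of $\cD(n)$ would determine only the difference of the two components. What makes the two equations in \eqref{mom1} separately forced is that $\bigl(\{\cF,\phi_+\}_\kappa\phi_+^{-1},\,\{\cF,\phi_-\}_\kappa\phi_-^{-1}\bigr)$ automatically lies in $\gn^*_\delta$ (because $(\phi_+,\phi_-)$ takes values in $\Gn^*$), and the pairing between $\gn_\delta$ and $\gn^*_\delta$ is non-degenerate. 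The paper's dual-basis formulation builds this in; in your write-up you should state it explicitly.
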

\begin{proof}
Let $T^a = (X^a, X^a)$ and $T_a= (Z_a, W_a)$ be dual bases of $\gn_\delta$ and $\gn_\delta^*$.
Consider an arbitrary matrix element $A_{i\alpha}$ as a function on $S(n,d)$. Its
 derivative along the vector field
induced by $X^a\in \gn$  equals $(X^aA)_{i\alpha}$, and \eqref{momcond} gives the identity
\be
(X^a A)_{i\alpha} = \langle T^a, \{ A_{i\alpha}, (\phi_+, \phi_-) \}_\kappa (\phi_+^{-1}, \phi_-^{-1}) \rangle_\kappa.
\ee
Since $T^a$ is a basis of $\gn_\delta$, this implies that
\be
\{ A_{i\alpha}, (\phi_+, \phi_-)\}_\kappa (\phi_+^{-1}, \phi_-^{-1}) = (X^a A)_{i\alpha} T_a,
\ee
This is equivalent to the relations
\be
\{ A_{i\alpha}, \phi_+\}_\kappa = (X^a A)_{i\alpha} Z_a \phi_+
\quad\hbox{and}\quad
 \{ A_{i\alpha}, \phi_-\}_\kappa = (X^a A)_{i\alpha} W_a \phi_-\,.
 \ee
 By using the identities \eqref{rid}, these two equations are just the componentwise form of
 the first tensorial formulae in \eqref{mom1}. The relations involving $B$ are verified in the same way.
 The equalities in \eqref{mom1} are converted into those in \eqref{mom2} by a short calculation.
Since the matrix elements of $A$ and $B$ form a coordinate system on $S(n,d)$, the
proof is complete.
\end{proof}

\begin{prop} \label{pr:Cov3}
Define the map $\Gamma: S(n,d) \to \gn$ by the formula
\be
\Gamma(A,B) = \1_n + AB.
\ee
As a consequence of the Poisson brackets \eqref{I4}, this map satisfies the
relation
\be
\{ \Gamma_1, \Gamma_2\}_\kappa =  \kappa \left(  \Gamma_1 r_-^n \Gamma_2 + \Gamma_2 r_+^n \Gamma_1 - \Gamma_1 \Gamma_2 r^n - r^n \Gamma_1 \Gamma_2 \right)
\label{Ga1}\ee
together with
\be
\{ A_1, \Gamma_2\}_\kappa = \kappa (\Gamma_2 r_+^n - r_-^n \Gamma_2) A_1
\quad\hbox{and}\quad
\{ B_1, \Gamma_2\}_\kappa = \kappa B_1 (r_-^n \Gamma_2 - \Gamma_2 r_+^n).
\label{Ga2}\ee
In a neighborhood of zero, $\Gamma$ can be represented in the form $\Gamma = \Gamma_+ \Gamma_-^{-1}$ so that
$(\Gamma_+, \Gamma_-): S(n,d) \to \Gn^*$  serves  (locally) as the Poisson--Lie moment map for the action \eqref{H1}.
\end{prop}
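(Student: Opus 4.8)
The plan splits into two essentially independent parts: (i) establish the bracket relations \eqref{Ga1} and \eqref{Ga2}; (ii) produce the local factorization $\Gamma = \Gamma_+\Gamma_-^{-1}$ near the origin and verify the two defining properties of a Poisson--Lie moment map. For part (i) I would simply compute. Writing $\Gamma = \1_n + AB$ and using the Leibniz rule with the brackets \eqref{I4}, one expands $\{A_1,\Gamma_2\}_\kappa$, $\{B_1,\Gamma_2\}_\kappa$ and $\{\Gamma_1,\Gamma_2\}_\kappa$ in terms of the basic brackets $\{A_i,A_j\}_\kappa$, $\{A_i,B_j\}_\kappa$, $\{B_i,B_j\}_\kappa$ and then substitutes $(AB)_1 = \Gamma_1 - \1_n$, $(AB)_2 = \Gamma_2 - \1_n$. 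The only point requiring care is that the brackets \eqref{I4} also carry the $r^d$-matrices and the mixed tensor $C_{12}^{n\times d}$; one must check that after these substitutions \emph{all} the $d$-dependent contributions (those built from $r^d$, $I^d$ and $C_{12}^{n\times d}$) cancel or recombine into pure $r^n$-expressions, using identities of the type employed in the proof of Proposition \ref{pr:Cov1} (e.g.\ $g_1 C_{12}^{n\times d} = C_{12}^{n\times d} g_2$) together with the analogous rules for moving $A$, $B$ and $r^d$ past $C_{12}^{n\times d}$. Once \eqref{Ga2} is available, \eqref{Ga1} can alternatively be obtained by bracketing the two factors of $\Gamma_1 = \1_n + A_1 B_1$ against $\Gamma_2$, substituting $A_1 B_1 = \Gamma_1 - \1_n$, and simplifying with $r_+^n - r_-^n = I^n$. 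I expect no conceptual difficulty here, only bookkeeping.

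For part (ii), note first that $\Gamma(0,0) = \1_n$, so in a neighbourhood of the origin $\Gamma$ takes values in a neighbourhood of $\1_n$ in $\Gn$ on which the covering map $\chi$ of \eqref{chi} restricts to a holomorphic Poisson diffeomorphism (the branch of the square root of the diagonal Gauss factor being fixed near $\1_n$, as in the footnote on $\chi$). Hence $(\Gamma_+,\Gamma_-) := \chi^{-1}\circ \Gamma$ is a well-defined holomorphic map into $\Gn^*$ on a neighbourhood of the origin, with $\Gamma = \Gamma_+\Gamma_-^{-1}$. The second moment map condition — that $(\Gamma_+,\Gamma_-)$ be Poisson with respect to \eqref{starPB} — is then immediate: by \eqref{Ga1}, $\Gamma$ is a Poisson map onto $(\Gn, \{\ ,\ \}_{\mathrm{STS}}^\kappa)$ of \eqref{STSPB}, and composing with the local Poisson inverse $\chi^{-1}$ produces a Poisson map onto $\Gn^*$.

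It remains to verify the first moment map condition \eqref{momcond}. By the purely computational argument of Proposition \ref{pr:Cov2}, this is equivalent to the pair of relations \eqref{mom1} with $\phi_\pm = \Gamma_\pm$, so the task is to extract \eqref{mom1} from \eqref{Ga2}. I would do this by bracketing $A_1$ (resp.\ $B_1$) against the identity $\Gamma_{+,2} = \Gamma_2\,\Gamma_{-,2}$: Leibniz and \eqref{Ga2} express the combination $\{A_1,\Gamma_{+,2}\}\Gamma_{+,2}^{-1} - \Gamma_2\bigl(\{A_1,\Gamma_{-,2}\}\Gamma_{-,2}^{-1}\bigr)\Gamma_2^{-1}$ in closed form, and the constraints defining $\Gn^*$ (upper/lower triangularity of $\Gamma_\pm$ together with reciprocal diagonals) must then be used to pin down the two pieces separately. \textbf{This separation step is the main obstacle:} because $\Gamma_2$ is not triangular, the triangular decomposition alone does not split the equation, and one needs an argument valid only near $\1_n$ (e.g.\ a Neumann-series/implicit-function argument, exploiting that at the origin $\Gamma = \1_n$ makes the splitting trivial) to conclude that the solution is the one predicted by \eqref{mom1}. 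An alternative, cleaner route — at the cost of invoking more of the general theory of Poisson--Lie moment maps — is to argue that \eqref{Ga1} already exhibits $\Gamma$ as the projection $\Gamma_+\Gamma_-^{-1}$ of the moment map of \emph{some} Poisson action of $\Gn$ on $(S(n,d),\{\ ,\ \}_\kappa)$, and that \eqref{Ga2} identifies that generated action with \eqref{H1}; the conjunction of \eqref{Ga1} and \eqref{Ga2} then yields \eqref{momcond} directly. (One could also verify \eqref{mom1} by brute force from an explicit description of $\Gamma_\pm$ via the Gauss decomposition of $\1_n + AB$, but the arguments above avoid that.) Either way, all three assertions of the proposition follow.
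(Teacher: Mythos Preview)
Your proposal is correct and follows essentially the same route as the paper: part (i) is a direct computation (the paper singles out the identities $A_1 A_2 I^d = I^n A_1 A_2$ and $I^n A_1 = A_2 C_{12}^{n\times d}$ as the only nontrivial ingredients), and part (ii) constructs $(\Gamma_+,\Gamma_-)$ as $\chi^{-1}\circ\Gamma$ near the origin and reads off the Poisson property of $(\Gamma_+,\Gamma_-)$ from \eqref{Ga1}.

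The one place where you work harder than necessary is the ``separation step'' for the first moment map condition. The paper does not attempt to split $\{A_1,\Gamma_{+,2}\}$ from $\{A_1,\Gamma_{-,2}\}$ inside the Leibniz expansion of $\{A_1,\Gamma_2\}$. Instead it observes that \eqref{Ga2} is literally \eqref{mom2} with $\phi=\Gamma$, and that since $\chi$ is a local holomorphic (Poisson) diffeomorphism, the matrix entries of $\Gamma$ and of $(\Gamma_+,\Gamma_-)$ generate the same algebra of local functions. Hence verifying the infinitesimal moment map condition against the entries of $\Gamma$ (i.e.\ \eqref{mom2}) is \emph{equivalent} to verifying it against the entries of $(\Gamma_+,\Gamma_-)$ (i.e.\ \eqref{mom1}); the latter is, by Proposition~\ref{pr:Cov2}, equivalent to \eqref{momcond}. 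So no triangular splitting, Neumann series, or implicit-function argument is needed --- the local diffeomorphism does all the work. Your alternative routes would also succeed, but this observation is the shortest path.
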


\begin{proof}
The equalities \eqref{Ga1} and \eqref{Ga2} can be verified by an easy  calculation.
For this, one needs to use the identities $A_1 A_2 I^d = I^n A_1 A_2$ and
$I^n A_1 = A_2 C_{12}^{n\times d}$.
Since $\Gamma(0)=\1_n$, it is clear that $\Gamma$ admits a unique factorization of the form $\Gamma = \Gamma_+ \Gamma_-^{-1}$ if we restrict
$(A,B)$ to be near enough to zero, require the continuity of $\Gamma_\pm$ and impose the condition
$\Gamma_\pm(0)= \1_n$. The so obtained $(\Gamma_+, \Gamma_-)$ can be written as holomorphic functions
of the matrix entries of $\Gamma$.
Equation \eqref{Ga1} entails that $(\Gamma_+, \Gamma_-)$ gives a Poisson map into the dual group $\Gl^*$ carrying the brackets \eqref{starPB}, and the relations
\eqref{Ga2} are equivalent to the  moment map conditions given in \eqref{mom1}.
Here, we used the coincidence of \eqref{mom2} with \eqref{Ga2} and that $\Gamma$ and $(\Gamma_+, \Gamma_-)$
are related by a local Poisson diffeomorphism.
\end{proof}

\begin{rem}
A natural generalization of Proposition \ref{pr:Cov3} holds around  an arbitrary point $(A_0, B_0)\in S(n,d)$ for which
$(\1_n + A_0 B_0)$ is an invertible matrix. See footnote \ref{Foot1} for the construction of $(\Gamma_+, \Gamma_-)$.
\end{rem}

\begin{rem} \label{rem:factid}
We observe from  Proposition \ref{pr:Cov3} and the first factorization identity \eqref{factid1} that $(g_+, g_-)$ given by  (\ref{gpm})
is nothing but
the (local) $\Gn^*$-valued  Poisson--Lie moment map on $S(n,1)$.
For $d>1$,  the meaning of the second factorization identity \eqref{factid2} is that the (local) moment map $(\Gamma_+, \Gamma_-)$ mentioned
in Proposition  \ref{pr:Cov3}  satisfies the equality
\be
(\Gamma_\pm \circ m)(a,b) = \cG_\pm (a, b),
\ee
where $m$ is the map of Theorem \ref{Thm:Main} and $\cG_\pm$ are defined in \eqref{cGpmdef}.
\end{rem}

\section{Derivation of Theorem \ref{Thm:Main}} \label{Sec:Proof}

The Poisson structure is derived in \ref{ss:ProofPB}, and we prove that $m$ is a diffeomorphism in \ref{ss:ProofDiff}.
The strategy of the derivation is analogous to \cite[Lemma 5.1]{FFM}.

\subsection{The Poisson structure} \label{ss:ProofPB}

\subsubsection{Preparation and notations}
The product Poisson structure on $S(n,1)^{\times d}$ can be written in tensor notation using the pairs $(a^\alpha,b^\alpha)$  as follows
\begin{subequations}
 \begin{align}
  \br{a_1^\alpha,a_2^\beta}_\kappa=&-\kappa \delta_{\alpha \beta} \,r^n\, a_1^\alpha a_2^\beta \,, \qquad
  \br{b_1^\alpha,b_2^\beta}_\kappa=-\kappa\delta_{\alpha\beta} b_1^\alpha b_2^\beta \,r^n\,, \label{Eq:PBdaa} \\
  \br{a_1^\alpha, b_2^\beta}_\kappa=&\kappa \delta_{\alpha\beta} \left(b_2^\beta \,r_+^n\, a_1^\alpha + \frac12 a_1^\alpha b_2^\beta +C_{12}^{n\times 1}\right) \,.\label{Eq:PBdbb}
 \end{align}
\end{subequations}
For fixed $\alpha\in \{1,\ldots,d\}$, we use the pair $(a^\alpha,b^\alpha)$ to define  $G^\alpha_j$ locally by \eqref{Eq:Gj}, then the upper and lower triangular matrices $g_{+,\alpha},g_{-,\alpha}^{-1}$ by \eqref{Eq:gplus}--\eqref{Eq:gminus}.
Using \eqref{factid1} in the form
\begin{equation} \label{Eq:Fact}
 g_{+,\alpha}g_{-,\alpha}^{-1}=\1_n+a^\alpha b^\alpha\,,
\end{equation}
we can combine Propositions \ref{pr:Cov2} and \ref{pr:Cov3} for each copy $S(n,1)$, and we obtain
\begin{subequations}
 \begin{align}
\br{a_1^\alpha,(g_{+,\beta})_2}_\kappa=&-\kappa \delta_{\alpha\beta} \,r_-^n\, a_1^\alpha (g_{+,\beta})_2\,, \quad
\br{a_1^\alpha,(g_{-,\beta}^{-1})_{2}}_\kappa=\kappa \delta_{\alpha\beta}(g_{-,\beta}^{-1})_{2} \,r_+^n\, a_1^\alpha\,,  \label{Eq:agplus} \\
\br{b_1^\alpha,(g_{+,\beta})_2}_\kappa=&\kappa \delta_{\alpha\beta} b_1^\alpha \,r_-^n\,  (g_{+,\beta})_2\,, \quad
\br{b_1^\alpha,(g_{-,\beta}^{-1})_{2}}_\kappa=-\kappa \delta_{\alpha\beta}b_1^\alpha (g_{-,\beta}^{-1})_{2} \,r_+^n\,. \label{Eq:bgplus}
 \end{align}
\end{subequations}
We also use the formulae \eqref{starPB} to write
\begin{subequations}
 \begin{align}
\br{(g_{+,\alpha})_1,(g_{+,\beta})_2}_\kappa=&\kappa \delta_{\alpha\beta} [(g_{+,\alpha})_1 (g_{+,\beta})_2, r^n]  \,, \label{Eq:ggplus} \\
\br{(g_{-,\alpha}^{-1})_1,(g_{-,\beta}^{-1})_2}_\kappa=&-\kappa \delta_{\alpha\beta}[(g_{-,\alpha}^{-1})_1 (g_{-,\beta}^{-1})_2, r^n]\,,\label{Eq:ggminus}\\
    \br{(g_{+,\alpha})_{1},(g_{-,\beta}^{-1})_{2}}_\kappa=&
-\kappa \delta_{\alpha\beta} \left((g_{+,\alpha})_1 r_+^n (g_{-,\beta}^{-1})_2 - (g_{-,\beta}^{-1})_2 r_+^n (g_{+,\alpha})_1 \right)\,.  \label{Eq:ggpm}
 \end{align}
\end{subequations}
\begin{rem}
 The  involution  $ \iota :  S(n,1)^{\times d} \to S(n,1)^{\times d}$ defined by
 \begin{equation} \label{Eq:iotad}
\iota(a^1,b^1,\dots, a^d,b^d)=((b^1)^T,(a^1)^T,\dots,(b^d)^T,(a^d)^T )\,,
 \end{equation}
is an anti-Poisson automorphism by Remark \ref{Rem:inv} in the appendix.
Using this map, we observe the following identities of matrix-valued functions
\begin{equation} \label{Eq:iotagg}
 g_{+,\alpha}\circ \iota =(g_{-,\alpha}^{-1})^T \,, \quad g_{-,\alpha}^{-1}\circ \iota=g_{+,\alpha}^T\,.
\end{equation}
 The consistency of the anti-Poisson property with the Poisson brackets collected above is straightforward to check.
\end{rem}

To ease computations, we introduce
\begin{equation}
h_+^\alpha=g_{+,1}\cdots g_{+,\alpha}\,, \quad
h_-^\alpha=g_{-,\alpha}^{-1}\cdots g_{-,1}^{-1}\,,\quad 1\leq \alpha \leq d\,,
\end{equation}
with $h_+^0=\1_n=h_-^0$, so that \eqref{Eq:MainA}--\eqref{Eq:MainB} become
\begin{equation} \label{Eq:ABbis}
 A^\alpha=h_{+}^{\alpha-1}a^\alpha\,, \quad B^\alpha=b^\alpha h_-^{\alpha-1}\,, \quad 1\leq \alpha \leq d\,.
\end{equation}
It will also be convenient to introduce for $1\leq \alpha \leq \gamma \leq d$
\begin{equation}
 \begin{aligned}
  h_+^{\alpha;\gamma}=&g_{+,\alpha}\cdots g_{+,\gamma}\,, \quad h_+^{\alpha;\alpha}=g_{+,\alpha},\quad
  h_-^{\alpha;\gamma}=g_{-,\gamma}^{-1} \cdots g_{-,\alpha}^{-1}\,, \quad h_-^{\alpha;\alpha}=g_{-,\alpha}^{-1}\,,
 \end{aligned}
\end{equation}
and we set $h_+^{\alpha;\alpha-1}=\1_n=h_-^{\alpha;\alpha-1}$.
We note in particular that under the involution $\iota$ \eqref{Eq:iotad} which satisfies \eqref{Eq:iotagg}, we can write
\begin{equation} \label{Eq:iotahh}
h_+^{\alpha;\gamma}\circ \iota=(h_-^{\alpha;\gamma})^T\,, \quad
h_-^{\alpha;\gamma}\circ \iota=(h_+^{\alpha;\gamma})^T\,.
\end{equation}

\subsubsection{Preliminary lemmae}

\begin{lem} \label{L:bas1}
The following identities hold
 \begin{subequations}
  \begin{align}
   \br{a_1^\alpha,(h_+^\beta)_2}_\kappa=& -\kappa\delta_{(\alpha \leq \beta)} (h_+^{\alpha-1})_2 \,r_-^n\, a_1^\alpha (h_+^{\alpha;\beta})_2\,, \\
   \br{b_1^\alpha,(h_+^\beta)_2}_\kappa=& \kappa \delta_{(\alpha \leq \beta)}  (h_+^{\alpha-1})_2 b_1^\alpha \,r_-^n\, (h_+^{\alpha;\beta})_2\,, \\
   \br{a_1^\alpha,(h_-^\beta)_2}_\kappa=& \kappa \delta_{(\alpha \leq \beta)} (h_-^{\alpha;\beta})_2 \,r_+^n\, a_1^\alpha (h_-^{\alpha-1})_2  \,, \\
   \br{b_1^\alpha,(h_-^\beta)_2}_\kappa=& -\kappa \delta_{(\alpha \leq \beta)} b_1^\alpha (h_-^{\alpha;\beta})_2 \,r_+^n\, (h_-^{\alpha-1})_2  \,.
  \end{align}
 \end{subequations}
Here, the value of $\delta_{(\alpha \leq \beta)}$ is $1$ if the condition $\alpha \leq \beta$ holds, and is zero otherwise.
\end{lem}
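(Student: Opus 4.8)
The plan is to obtain all four identities in one stroke from the single‑factor brackets \eqref{Eq:agplus}--\eqref{Eq:bgplus} by applying the Leibniz (derivation) property of the Poisson bracket, with $a_1^\alpha$ (or $b_1^\alpha$) acting as a derivation on the products $h_+^\beta=g_{+,1}\cdots g_{+,\beta}$ and $h_-^\beta=g_{-,\beta}^{-1}\cdots g_{-,1}^{-1}$ sitting in the second tensor slot. For the first identity one writes
\begin{equation*}
\br{a_1^\alpha,(h_+^\beta)_2}_\kappa=\sum_{\gamma=1}^{\beta}(h_+^{\gamma-1})_2\,\br{a_1^\alpha,(g_{+,\gamma})_2}_\kappa\,(h_+^{\gamma+1;\beta})_2\,,
\end{equation*}
with the conventions $h_+^0=\1_n$ and $h_+^{\beta+1;\beta}=\1_n$. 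Because distinct copies of $S(n,1)$ Poisson‑commute, \eqref{Eq:agplus} kills every summand except $\gamma=\alpha$, and that summand is present only when $\alpha\leq\beta$ — this is precisely the origin of the factor $\delta_{(\alpha\leq\beta)}$. For $\gamma=\alpha$ one substitutes $\br{a_1^\alpha,(g_{+,\alpha})_2}_\kappa=-\kappa\,r_-^n\,a_1^\alpha\,(g_{+,\alpha})_2$ and uses $(g_{+,\alpha})_2(h_+^{\alpha+1;\beta})_2=(h_+^{\alpha;\beta})_2$ to collapse the string to $-\kappa\,(h_+^{\alpha-1})_2\,r_-^n\,a_1^\alpha\,(h_+^{\alpha;\beta})_2$, which is the asserted formula.

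The third identity is handled the same way: expanding $\br{a_1^\alpha,(h_-^\beta)_2}_\kappa$ by Leibniz along $h_-^\beta=g_{-,\beta}^{-1}\cdots g_{-,1}^{-1}$, only the $\gamma=\alpha$ term survives (again requiring $\alpha\leq\beta$), one inserts $\br{a_1^\alpha,(g_{-,\alpha}^{-1})_2}_\kappa=\kappa\,(g_{-,\alpha}^{-1})_2\,r_+^n\,a_1^\alpha$, and now the \emph{prefix} $(g_{-,\beta}^{-1}\cdots g_{-,\alpha+1}^{-1})_2$ recombines with $(g_{-,\alpha}^{-1})_2$ on its right into $(h_-^{\alpha;\beta})_2$, while the trailing factors give $(h_-^{\alpha-1})_2$. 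The two identities involving $b_1^\alpha$ follow verbatim with \eqref{Eq:bgplus} in place of \eqref{Eq:agplus}; alternatively they are the images of the two $a$‑identities under the anti‑Poisson involution $\iota$ of \eqref{Eq:iotad}, since \eqref{Eq:iotagg}--\eqref{Eq:iotahh} interchange $g_{+,\alpha}$ with $(g_{-,\alpha}^{-1})^T$ and $h_+^{\alpha;\gamma}$ with $(h_-^{\alpha;\gamma})^T$, the sign flip of an anti‑Poisson map being absorbed by the transposition behaviour of $r^n$ recorded in the Remark containing \eqref{Eq:iotagg}.

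There is no real obstacle here; the only care needed is the tensorial bookkeeping — keeping all the $h_\pm$ matrices and one leg of $r_\mp^n$ in the second slot while $a_1^\alpha$ (or $b_1^\alpha$) multiplies the other leg of $r_\mp^n$ in the first slot — together with the degenerate‑string conventions ($h_\pm^0=\1_n$, $h_\pm^{\gamma;\gamma-1}=\1_n$) that cover the boundary cases $\alpha=1$ and $\alpha=\beta$, and the remark that when $\alpha>\beta$ the sum is empty, so the bracket vanishes in agreement with $\delta_{(\alpha\leq\beta)}=0$.
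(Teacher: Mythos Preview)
Your proof is correct and follows essentially the same approach as the paper: expand the product $h_\pm^\beta$ via the Leibniz rule, note that only the $\gamma=\alpha$ term survives by \eqref{Eq:agplus}--\eqref{Eq:bgplus} (whence the factor $\delta_{(\alpha\leq\beta)}$), and substitute. The only cosmetic difference is that the paper derives the two $h_+$ identities directly and deduces the two $h_-$ identities via the anti-Poisson involution $\iota$, whereas you derive the two $a$-identities directly and obtain the $b$-identities either directly or via $\iota$; both orderings are equally valid.
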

\begin{proof}
We have from \eqref{Eq:agplus} that $\br{a_1^\alpha,(g_{+,\gamma})_2}_\kappa$ vanishes identically if $\gamma\neq \alpha$. By definition of $h_+^\beta$, we thus get
\begin{equation}
 \br{a_1^\alpha,(h_+^\beta)_2}_\kappa= (h_+^{\alpha-1})_{2}\br{a_1^\alpha,(g_{+,\alpha})_{2}}_\kappa (h_+^{\alpha+1;\beta})_{2}\,,
\end{equation}
if $\alpha \leq \beta$, while it vanishes for $\beta<\alpha$. We then get the first equality from \eqref{Eq:agplus}.
The second equality is found in the same way, and the following two are obtained by applying the anti-Poisson automorphism $\iota$.
\end{proof}

\begin{lem} \label{L:bas2}
The following identities hold
 \begin{subequations}
  \begin{align}
  \br{(h_+^\alpha)_1,(h_+^\beta)_2}_\kappa\stackrel{\alpha \leqslant \beta}{=}&
\kappa\left((h_+^\alpha)_1(h_+^\alpha)_2 \, r^n\, (h_+^{\alpha+1;\beta})_2  - r^n (h_+^\alpha)_1(h_+^\beta)_2\right) \,, \\
  \br{(h_+^\alpha)_1,(h_-^\beta)_2}_\kappa\stackrel{\alpha \leqslant \beta}{=}&
\kappa \left((h_-^\beta)_2\,r_+^n\,(h_+^\alpha)_1 - (h_+^\alpha)_1(h_-^{\alpha+1;\beta})_2\, r_+^n\, (h_-^\alpha)_2 \right)\,, \\
  \br{(h_+^\alpha)_1,(h_-^\beta)_2}_\kappa \stackrel{\alpha \geqslant \beta}{=}&
 \kappa \left((h_-^\beta)_2\,r_+^n\,(h_+^\alpha)_1 - (h_+^\beta)_1\, r_+^n\,(h_+^{\beta+1;\alpha})_1 (h_-^\beta)_2 \right)\,.
  \end{align}
 \end{subequations}
\end{lem}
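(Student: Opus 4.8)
The plan is to compute each bracket by the Leibniz rule, expanding $h_+^\beta = h_+^\alpha\, h_+^{\alpha+1;\beta}$ (and similarly for $h_-$), and then reducing everything to the elementary brackets \eqref{Eq:ggplus}--\eqref{Eq:ggpm} together with Lemma \ref{L:bas1}. The crucial structural observation is that in each of the three cases the two matrix-valued functions being bracketed decompose into a \emph{common} initial segment and a \emph{disjoint} tail, and the elementary brackets say that factors carrying different copy-indices Poisson commute, while factors carrying the same index obey the Semenov-Tian-Shansky-type relations \eqref{Eq:ggplus}--\eqref{Eq:ggpm}. So the whole computation is an exercise in bookkeeping of which segments overlap.

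For the first identity, with $\alpha\leqslant\beta$, write $(h_+^\beta)_2 = (h_+^\alpha)_2 (h_+^{\alpha+1;\beta})_2$. Applying Leibniz in the second slot,
\[
\br{(h_+^\alpha)_1,(h_+^\beta)_2}_\kappa
= \br{(h_+^\alpha)_1,(h_+^\alpha)_2}_\kappa (h_+^{\alpha+1;\beta})_2
  + (h_+^\alpha)_2\, \br{(h_+^\alpha)_1,(h_+^{\alpha+1;\beta})_2}_\kappa .
\]
The first term is handled by iterating \eqref{Eq:ggplus} (or, equivalently, by first establishing the single identity $\br{(h_+^\alpha)_1,(h_+^\alpha)_2}_\kappa = \kappa[(h_+^\alpha)_1(h_+^\alpha)_2, r^n]$ by induction on $\alpha$ using \eqref{Eq:ggplus} and the fact that distinct copies commute). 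The second term vanishes because $h_+^\alpha$ involves only copies $1,\dots,\alpha$ while $h_+^{\alpha+1;\beta}$ involves only copies $\alpha+1,\dots,\beta$, which are disjoint; here one uses that $\br{a_1^\gamma,(h_+^{\alpha+1;\beta})_2}_\kappa$ and $\br{b_1^\gamma,(h_+^{\alpha+1;\beta})_2}_\kappa$ vanish for $\gamma\le\alpha$, which follows from Lemma \ref{L:bas1} applied to the sub-product $h_+^{\alpha+1;\beta}$ (the condition $\gamma\le\alpha$ fails the range requirement). Expanding $[(h_+^\alpha)_1(h_+^\alpha)_2,r^n](h_+^{\alpha+1;\beta})_2$ and using $r^n(h_+^{\alpha+1;\beta})_2 = (h_+^{\alpha+1;\beta})_2 r^n + [r^n,(h_+^{\alpha+1;\beta})_2]$ — but in fact it is cleaner to simply rewrite $(h_+^\alpha)_1(h_+^\alpha)_2 r^n (h_+^{\alpha+1;\beta})_2$ and $r^n (h_+^\alpha)_1(h_+^\alpha)_2(h_+^{\alpha+1;\beta})_2$ directly and recognize $(h_+^\alpha)_2(h_+^{\alpha+1;\beta})_2 = (h_+^\beta)_2$ — yields the stated right-hand side.

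The second and third identities are the genuinely new cases and require more care, since now we bracket an $h_+$ with an $h_-$, which share copies. For the second ($\alpha\leqslant\beta$), write $(h_-^\beta)_2 = (h_-^{\alpha+1;\beta})_2 (h_-^\alpha)_2$ (note the order: $h_-^\beta = g_{-,\beta}^{-1}\cdots g_{-,1}^{-1}$, so the factors with indices $\alpha+1,\dots,\beta$ sit to the \emph{left} of those with indices $1,\dots,\alpha$). Leibniz in the second slot gives two terms; the factor $(h_-^{\alpha+1;\beta})_2$ commutes with $(h_+^\alpha)_1$ by disjointness of copy-indices, so that term contributes $(h_-^{\alpha+1;\beta})_2\br{(h_+^\alpha)_1,(h_-^\alpha)_2}_\kappa$, and the other term contributes $\br{(h_+^\alpha)_1,(h_-^{\alpha+1;\beta})_2}_\kappa (h_-^\alpha)_2 = 0$. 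One is thus reduced to the "equal upper length equals lower length" bracket $\br{(h_+^\alpha)_1,(h_-^\alpha)_2}_\kappa$, which I would establish first as a separate sub-lemma by induction on $\alpha$: the base case is \eqref{Eq:ggpm}, and the inductive step writes $h_+^\alpha = h_+^{\alpha-1} g_{+,\alpha}$, $h_-^\alpha = g_{-,\alpha}^{-1} h_-^{\alpha-1}$, applies Leibniz, uses \eqref{Eq:ggpm} for the index-$\alpha$ pieces, \eqref{Eq:agplus}--\eqref{Eq:bgplus} (via Lemma \ref{L:bas1}) for the cross terms where $g_{+,\alpha}$ meets $h_-^{\alpha-1}$ or $h_+^{\alpha-1}$ meets $g_{-,\alpha}^{-1}$, and the inductive hypothesis for the $(h_+^{\alpha-1},h_-^{\alpha-1})$ piece; the telescoping should collapse to
\[
\br{(h_+^\alpha)_1,(h_-^\alpha)_2}_\kappa
= -\kappa\Bigl((h_+^\alpha)_1 r_+^n (h_-^\alpha)_2 - (h_-^\alpha)_2 r_+^n (h_+^\alpha)_1\Bigr).
\]
Substituting this into the reduced expression and multiplying back by $(h_-^{\alpha+1;\beta})_2$ on the left, then recognizing $(h_-^{\alpha+1;\beta})_2(h_-^\alpha)_2 = (h_-^\beta)_2$ and inserting $(h_-^{\alpha+1;\beta})_2$ appropriately, gives the stated formula. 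The third identity ($\alpha\geqslant\beta$) follows from the second by applying the anti-Poisson involution $\iota$ and the relations \eqref{Eq:iotahh}: $\iota$ swaps $h_+^{\alpha;\gamma}\leftrightarrow (h_-^{\alpha;\gamma})^T$, turns the bracket with the opposite sign, reverses tensor factors, and sends $r_\pm^n\mapsto -(r_\mp^n)^T$ in the appropriate sense, so that the $\alpha\leqslant\beta$ case is carried to the $\alpha\geqslant\beta$ case; one then relabels $\alpha\leftrightarrow\beta$ and transposes back. Alternatively the third identity can be proved directly in parallel with the second, writing $(h_+^\alpha)_1 = (h_+^\beta)_1 (h_+^{\beta+1;\alpha})_1$.

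The main obstacle is the induction establishing $\br{(h_+^\alpha)_1,(h_-^\alpha)_2}_\kappa = -\kappa\bigl((h_+^\alpha)_1 r_+^n (h_-^\alpha)_2 - (h_-^\alpha)_2 r_+^n (h_+^\alpha)_1\bigr)$: the inductive step produces several cross terms involving $r_+^n$, $r_-^n$ sandwiched between $h_+$'s and $h_-$'s of different lengths, and one must check that they cancel in pairs using the identities $r_+^n - r_-^n = I^n$, $[I^n, X_1 X_2]=0$, and the commutation of $I^n$ past a matrix in one tensor slot at the cost of moving it into the other slot. Verifying this cancellation is the one place where the "routine calculation" really has to be done carefully; everything else is disjointness bookkeeping and one application of $\iota$.
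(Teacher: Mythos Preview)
Your approach is correct and close in spirit to the paper's, but the paper organizes the computation differently and more economically. Rather than peeling off the disjoint tail and then inducting on the common length, the paper expands the whole bracket in one stroke as a single sum over the shared copy-indices: for instance, for $\alpha\le\beta$,
\[
\br{(h_+^\alpha)_1,(h_-^\beta)_2}_\kappa=\sum_{\gamma=1}^\alpha (h_+^{\gamma-1})_1 (h_-^{\gamma+1;\beta})_2\,
\br{(g_{+,\gamma})_1,(g_{-,\gamma}^{-1})_2}_\kappa\, (h_+^{\gamma+1;\alpha})_1 (h_-^{\gamma-1})_2,
\]
and after inserting \eqref{Eq:ggpm} the sum telescopes directly to the stated answer. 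The third identity is then obtained ``in a similar way'' by the analogous expansion, not via $\iota$. Your two-stage reduction (tail removal, then induction on the equal-length case) is an unrolling of exactly this telescoping sum; neither approach uses anything the other doesn't, but the paper's single-sum formulation avoids setting up and checking an induction.

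One correction to your analysis of the obstacle: in your inductive step for $\br{(h_+^\alpha)_1,(h_-^\alpha)_2}_\kappa$, the two Leibniz cross terms $\br{(g_{+,\alpha})_1,(h_-^{\alpha-1})_2}_\kappa$ and $\br{(h_+^{\alpha-1})_1,(g_{-,\alpha}^{-1})_2}_\kappa$ vanish outright by disjointness of copy-indices; there is nothing for \eqref{Eq:agplus}--\eqref{Eq:bgplus} to contribute. Only the $(h_+^{\alpha-1},h_-^{\alpha-1})$ and $(g_{+,\alpha},g_{-,\alpha}^{-1})$ brackets survive, and substituting the inductive hypothesis and \eqref{Eq:ggpm} produces four terms of which two are the desired answer and the other two are literally
\[
\mp\kappa\,(h_+^{\alpha-1})_1 (g_{-,\alpha}^{-1})_2\, r_+^n\, (g_{+,\alpha})_1 (h_-^{\alpha-1})_2,
\]
cancelling on the nose. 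No use of $r_+^n-r_-^n=I^n$ or of commuting $I^n$ through tensor slots is needed; the step you flagged as the delicate one is in fact immediate.
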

\begin{proof}
For the first identity, since $\alpha \leq \beta$ we use the decomposition
 \begin{equation*}
\br{(h_+^\alpha)_1,(h_+^\beta)_2}_\kappa=\sum_{\gamma=1}^\alpha (h_+^{\gamma-1})_1(h_+^{\gamma-1})_2
\br{(g_{+,\gamma})_1,(g_{+,\gamma})_2}_\kappa (h_+^{\gamma+1;\alpha})_1 (h_+^{\gamma+1;\beta})_2 \,,
 \end{equation*}
and note that the Poisson bracket appearing in the sum is given by \eqref{Eq:ggplus}.
This directly leads to the claimed result.
For the second identity, we write for $\alpha \leq \beta$
  \begin{equation*}
\br{(h_+^\alpha)_1,(h_-^\beta)_2}_\kappa=\sum_{\gamma=1}^\alpha (h_+^{\gamma-1})_1 (h_-^{\gamma+1;\beta})_2
\br{(g_{+,\gamma})_1,(g_{-,\gamma}^{-1})_2}_\kappa (h_+^{\gamma+1;\alpha})_1 (h_-^{\gamma-1})_2 \,,
 \end{equation*}
then we use \eqref{Eq:ggpm} to get the desired result.  The case $\alpha \geq \beta$ is obtained in a similar way.
\end{proof}
Note that the identities from Lemmae \ref{L:bas1} and \ref{L:bas2} can be used with $h_{\pm}^0=\1_n$  as well.

\subsubsection{The Poisson brackets $\br{A_1,A_2}_\kappa$ and $\br{B_1,B_2}_\kappa$}

We note that obtaining $\br{A_1,A_2}_\kappa$ in \eqref{I4} is equivalent to deriving
\begin{equation}
 \label{Eq:AAexpl}
 \br{A_1^\alpha,A_2^\beta}_\kappa = -\kappa \left(r^n A_1^\alpha A_2^\beta + \frac12 \sgn(\alpha-\beta) A_1^\beta A_2^\alpha \right)\,.
\end{equation}
This follows by spelling out the action of $r^d$ using \eqref{I2}.
In order to get \eqref{Eq:AAexpl}, we note that \eqref{Eq:ABbis} yields
  \begin{equation}
  \begin{aligned}
\br{A_1^\alpha,A_2^\beta}_\kappa=&
 \br{(h_+^{\alpha-1})_1,(h_+^{\beta-1})_2}_\kappa a_1^\alpha a_2^\beta
+(h_+^{\beta-1})_2 \br{(h_+^{\alpha-1})_1, a_2^\beta}_\kappa a_1^\alpha   \\
& +(h_+^{\alpha-1})_1 \br{a_1^\alpha,(h_+^{\beta-1})_2}_\kappa  a_2^\beta
+(h_+^{\alpha-1})_1(h_+^{\beta-1})_2 \br{a_1^\alpha,a_2^\beta}_\kappa \,.
  \end{aligned}
 \end{equation}
 We can then use \eqref{Eq:PBdaa} and Lemmae \ref{L:bas1}, \ref{L:bas2} to reduce this expression. If $\alpha=\beta$, we directly get
\begin{equation}
 \br{A_1^\alpha,A_2^\alpha}_\kappa = -\kappa r^n A_1^\alpha A_2^\alpha \,.
\end{equation}
If $\alpha<\beta$, we get
\begin{equation}
 \br{A_1^\alpha,A_2^\beta}_\kappa = -\kappa r^n A_1^\alpha A_2^\alpha + \frac{\kappa}{2} (h_+^{\alpha-1})_1(h_+^{\alpha-1})_2\, I^n\, a_1^\alpha (h_+^{\alpha;\beta-1})_2 a_2^\beta \,.
\end{equation}
Upon using the identity
\begin{equation}
 I^n\, a_1^\alpha (h_+^{\alpha;\beta-1}a^\beta)_2  = a_2^\alpha  (h_+^{\alpha;\beta-1}a^\beta)_1\,,
\end{equation}
we find
\begin{equation}
 \br{A_1^\alpha,A_2^\beta}_\kappa = -\kappa \,r^n A_1^\alpha A_2^\beta + \frac{\kappa}{2}  A_1^\beta A_2^\alpha \,.
\end{equation}
Thus, we have derived \eqref{Eq:AAexpl} for all $\alpha \leq \beta$, hence it holds for all $\alpha,\beta$ by antisymmetry.
We can check that we obtain the claimed Poisson bracket for  $\br{B_1,B_2}_\kappa$ either by a direct computation, or using the anti-Poisson automorphism $\iota$ \eqref{Eq:iotad} under which $A\circ \iota=B^T$.

\subsubsection{The Poisson bracket $\br{A_1,B_2}_\kappa$}

We now use that obtaining $\br{A_1,B_2}_\kappa$ in \eqref{I4} is equivalent to deriving
\begin{equation}
 \label{Eq:ABexpl}
 \br{A_1^\alpha,B_2^\beta}_\kappa = \kappa \left(B_2^\beta \,r^n_+\, A_1^\alpha + \frac12 \delta_{\alpha\beta} A_1^\alpha B_2^\beta + \delta_{\alpha\beta} \sum_{\mu<\alpha}A_1^\mu B_2^\mu + \delta_{\alpha\beta} C_{12}^{n\times 1} \right)\,.
\end{equation}

We have by \eqref{Eq:ABbis} that
  \begin{equation}
  \begin{aligned} \label{Eq:pfAB1}
\br{A_1^\alpha,B_2^\beta}_\kappa=&
 b_{2}^\beta \br{(h_+^{\alpha-1})_{1},(h_-^{\beta-1})_{2}}_\kappa a_{1}^\alpha
+\br{(h_+^{\alpha-1})_{1}, b_{2}^\beta }_\kappa a_{1}^\alpha  (h_-^{\beta-1})_{2} \\
& +(h_+^{\alpha-1})_{1} b_{2}^\beta \br{a_{1}^\alpha,(h_-^{\beta-1})_{2}}_\kappa
+ (h_+^{\alpha-1})_{1} \br{a_{1}^\alpha,b_{2}^\beta}_\kappa (h_-^{\beta-1})_{2}   \,,
  \end{aligned}
 \end{equation}
which can be computed using \eqref{Eq:PBdbb} and Lemmae \ref{L:bas1}, \ref{L:bas2}.
If $\alpha<\beta$, only the first and third sums in \eqref{Eq:pfAB1} do not trivially vanish, and we find that
  \begin{equation}
\br{A_1^\alpha,B_2^\beta}_\kappa= \kappa B_2^\beta\,r_+^n\,A_1^\alpha\,. \label{Eq:pfAB2}
 \end{equation}
If $\alpha>\beta$, we also obtain \eqref{Eq:pfAB2} by a similar computation.
If $\alpha=\beta$,  only the first and fourth sums in \eqref{Eq:pfAB1} are nonzero, and we obtain
\begin{equation}
 \label{Eq:pfAB3}
 \br{A_1^\alpha,B_2^\alpha}_\kappa = \kappa \left(B_2^\alpha \,r^n_+\, A_1^\alpha + \frac12  A_1^\alpha B_2^\alpha
 + (h_+^{\alpha-1})_1 C_{12}^{n\times 1}  (h_-^{\alpha-1})_2 \right)\,.
\end{equation}
We deduce from \eqref{Eq:Fact} that $h_{+}^{\alpha-1}h_-^{\alpha-1}=\1_n+\sum_{\mu<\alpha} A^\mu B^\mu$, which implies
   \begin{equation}
(h_+^{\alpha-1})_1 C_{12}^{n\times 1}  (h_-^{\alpha-1})_2 = C_{12}^{n\times 1} + \sum_{\mu<\alpha} A_1^\mu B_2^\mu\,.
 \end{equation}
Thus, we can write $\br{A_1^\alpha,B_2^\beta}_\kappa$ for all $\alpha,\beta$ in the desired form \eqref{Eq:ABexpl}.

\subsection{Diffeomorphism property} \label{ss:ProofDiff}

Let us consider a point where the map $m$ \eqref{m} is well-defined, i.e. we can construct $g_{\pm,\alpha}=g_{\pm}(a^\alpha,b^\alpha)$ for $\alpha=1,\ldots,d$
using the formulae \eqref{Eq:gplus}, \eqref{Eq:gminus} with \eqref{Eq:Gj}.
In a sufficiently small neighborhood of this point, the entries of the matrices $g_{\pm,\alpha}$ are analytic functions, hence $m$ is holomorphic.
From the image of this neighborhood, we can define inductively
\begin{equation}
\begin{aligned}
 &a^1=A^1\,, \,\,\, b^1=B^1\,, \qquad
 a^2=g_{+,1}^{-1}A^2\,, \,\,\, b^2=B^2 g_{-,1}\,,\quad \ldots\,, \\
 &a^d=g_{+,d-1}^{-1}\cdots g_{+,1}^{-1}A^d\,, \,\,\, b^d=B^d g_{-,1} \cdots g_{-,d-1}\,,
\end{aligned}
\end{equation}
which is the inverse of the map $m$. The inverse map is holomorphic  since the elements $g_{\pm,\alpha}$ are analytic functions in $(A_i^\beta,B_i^\beta)$ for $\beta\leq\alpha$.

\section{A decoupling property of the Arutyunov--Olivucci bracket} \label{Sec:X}

We now derive an alternative  realization of the Poisson algebra \eqref{+PB},
as was promised after Lemma \ref{lem:lemI2}.
For this purpose, we take $d$ copies of $(S(n,1),\{\ ,\ \}_\kappa)$ with variables
$(a^\alpha, b^\alpha)$ for $\alpha =1,\dots, d$, and define the new variables $(\hA^\alpha, \hB^\alpha)$
as follows:
\begin{subequations}
\begin{align}
&\hA^\alpha (a,b)= g_{+}^{-1}(a^d,b^d)  \cdots g_{+}^{-1}(a^{\alpha}, b^{\alpha}) a^\alpha, \label{X1}\\
& \hB^\alpha(a,b) = b^\alpha g_-(a^{\alpha}, b^{\alpha})\cdots    g_-(a^d,b^d), \label{X2}
\end{align}
\end{subequations}
using the functions introduced previously in \eqref{Eq:gplus} and \eqref{Eq:gminus}.

\begin{thm} \label{thm:X1}
The map $F: (a,b) \mapsto (\hA, \hB)$ given by \eqref{X1}, \eqref{X2},
where $(\hA^\alpha, \hB^\alpha)$ denote the columns and the rows, respectively, of the
 matrices $(\hA,\hB)\in S(n,d)$, is a local Poisson diffeomorphism
\be
F: (S(n,1), \{\ ,\ \}_\kappa)^{\times d} \to (S(n,d), \{\ ,\ \}_{\kappa}'),
\ee
where $\{\ ,\ \}'_\kappa$ denotes the Poisson structure on $S(n,d)$ defined by
\bea
  &&\br{\hA_1,\hA_2}'_\kappa= \kappa (r^n \hA_1 \hA_2 - \hA_1 \hA_2 r^d)\,, \nonumber\\
  &&\br{\hB_1,\hB_2}'_\kappa= \kappa (\hB_1 \hB_2 r^n - r^d \hB_1 \hB_2) \,, \label{hatPB} \\
  &&\br{\hA_1,\hB_2}'_\kappa= \kappa(-\hB_2 r^n_+ \hA_1 + \hA_1 r^d_{-} \hB_2 + C_{12}^{n\times d})\,. \nonumber
\eea
\end{thm}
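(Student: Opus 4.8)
The plan is to mirror the proof of Theorem \ref{Thm:Main} almost verbatim, exploiting the sign-changes that distinguish $\{\ ,\ \}_\kappa'$ \eqref{hatPB} from \eqref{I4}. First I would observe that the defining formulae \eqref{X1}--\eqref{X2} are obtained from \eqref{Eq:MainA}--\eqref{Eq:MainB} essentially by reversing the order of the indices $1,\dots,d$ and replacing $g_{+}$ by $g_{+}^{-1}$ and $g_{-}^{-1}$ by $g_{-}$. Concretely, I would set $k_+^\alpha := g_{+,d}^{-1}\cdots g_{+,\alpha}^{-1}$ and $k_-^\alpha := g_{-,\alpha}\cdots g_{-,d}$ (with $k_\pm^{d+1}=\1_n$), so that $\hA^\alpha = k_+^{\alpha}a^\alpha$ and $\hB^\alpha = b^\alpha k_-^{\alpha}$, paralleling \eqref{Eq:ABbis}. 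The needed elementary brackets for $g_{+,\alpha}^{-1}$ and $g_{-,\alpha}$ follow from \eqref{Eq:agplus}--\eqref{Eq:ggpm} by the usual rule $\{X^{-1},Y\} = -X^{-1}\{X,Y\}X^{-1}$ (in tensor form), which flips $r_-^n\leftrightarrow -r_+^n$ in the appropriate places; I would record these as a short preliminary computation, together with the factorization identity $g_{-,\alpha}^{-1}g_{+,\alpha} = ?$ — here one must be a little careful, since \eqref{factid1} reads $\1_n + a^\alpha b^\alpha = g_{+,\alpha}g_{-,\alpha}^{-1}$, so the relevant product $k_+^\alpha$-type telescoping will instead involve $g_{-,\alpha}g_{+,\alpha}^{-1}$ or $g_{+,\alpha}^{-1}g_{-,\alpha}$; I expect $k_-^{\alpha}(k_+^{\alpha})^{-1} = g_{-,\alpha}\cdots g_{-,d}\,g_{+,d}\cdots g_{+,\alpha}$ is not directly $\1_n+\sum A^\mu B^\mu$, so the bookkeeping of the inhomogeneous $C_{12}$-term is where the real care is needed.

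Next, I would reprove the analogues of Lemmae \ref{L:bas1} and \ref{L:bas2} for the telescoping products $k_\pm^\alpha$ and their partial versions $k_\pm^{\alpha;\gamma}$, using only the single-copy brackets and the fact that brackets of quantities built from different copies $\alpha\neq\beta$ vanish (the Kronecker deltas $\delta_{\alpha\beta}$ in \eqref{Eq:agplus}--\eqref{Eq:ggpm}). Since the index order is reversed, the step-function $\delta_{(\alpha\leq\beta)}$ will be replaced by $\delta_{(\alpha\geq\beta)}$ throughout, but the structure of the computation is identical. Then, exactly as in \ref{ss:ProofPB}, I would expand $\{\hA_1^\alpha,\hA_2^\beta\}_\kappa$, $\{\hB_1^\alpha,\hB_2^\beta\}_\kappa$ and $\{\hA_1^\alpha,\hB_2^\beta\}_\kappa$ via the Leibniz rule in four terms each, substitute the elementary brackets, and collect. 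The target componentwise identities are the ones equivalent to \eqref{hatPB} after spelling out the $r^d$, $r^d_\pm$ and $C^{n\times d}_{12}$ pieces: for instance $\{\hA_1^\alpha,\hA_2^\beta\}_\kappa = \kappa\big(r^n\hA_1^\alpha\hA_2^\beta - \tfrac12\sgn(\alpha-\beta)\hA_1^\beta\hA_2^\alpha\big)$ and $\{\hA_1^\alpha,\hB_2^\beta\}_\kappa = \kappa\big(-\hB_2^\beta r_+^n\hA_1^\alpha + \tfrac12\delta_{\alpha\beta}\hA_1^\alpha\hB_2^\beta - \delta_{\alpha\beta}\sum_{\mu>\alpha}\hA_1^\mu\hB_2^\mu + \delta_{\alpha\beta}C_{12}^{n\times1}\big)$, the sum now running over $\mu>\alpha$ because the factorization telescopes in the opposite direction; one then checks that summing these into the full matrices reproduces \eqref{hatPB}, the apparent sign of the $\sum_{\mu>\alpha}$-term being absorbed by the $r^d$ versus $r^d_-$ difference. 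As in \ref{ss:ProofPB} I would also use the anti-Poisson involution $\iota$ of \eqref{Eq:iotad}, under which $\hA\circ\iota = \hB^T$ by \eqref{Eq:iotagg}, to deduce the $\hB$-$\hB$ bracket from the $\hA$-$\hA$ one for free.

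Finally, the diffeomorphism claim is handled exactly as in \ref{ss:ProofDiff}: near a point where all $g_{\pm,\alpha}=g_\pm(a^\alpha,b^\alpha)$ are defined, the entries of $g_{\pm,\alpha}$ and hence of $g_{+,\alpha}^{-1}$ are holomorphic, so $F$ is holomorphic, and one inverts inductively in the reverse order, first recovering $(a^d,b^d) = (\hA^d,\hB^d)$ (note $k_\pm^d = g_{\pm,d}^{\mp1}$ and $\hA^d = g_{+,d}^{-1}a^d$ — wait, so actually $\hA^d = g_{+,d}^{-1}a^d$, giving $a^d$ implicitly; more cleanly one starts from $\alpha=d$ where $g_{+,d}$ depends only on $(a^d,b^d)$ and the relation $\hA^d = g_{+,d}^{-1}a^d$, $\hB^d = b^d g_{-,d}$ determines $(a^d,b^d)$ as an analytic function of $(\hA^d,\hB^d)$ by the local inverse function theorem, since the Jacobian at the origin is the identity), then peeling off $g_{\pm,d}$ and proceeding to $\alpha=d-1$, and so on. The main obstacle, as flagged above, is getting the direction and range of the telescoped sum in the $\{\hA^\alpha,\hB^\alpha\}$ bracket right and matching its sign against the $r^d_-$ (rather than $r^d$) appearing in the third line of \eqref{hatPB}; everything else is a routine transcription of Section \ref{Sec:Proof} with $\leq$ replaced by $\geq$ and $g_+\leftrightarrow g_+^{-1}$, $g_-^{-1}\leftrightarrow g_-$.
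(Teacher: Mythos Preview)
Your plan for the Poisson-bracket half is essentially what the paper does: it simply says that the calculation ``follows the derivation of Theorem~\ref{Thm:Main} made in \ref{ss:ProofPB}'' and leaves the details to the reader, so your outline of transcribing Lemmae~\ref{L:bas1}--\ref{L:bas2} with the index order reversed and $g_\pm\mapsto g_\pm^{-1}$ is exactly the intended route. One small slip: your displayed target for $\{\hA_1^\alpha,\hB_2^\beta\}$ has the wrong sign on the diagonal term; spelling out $r^d_-$ gives $-\tfrac12\delta_{\alpha\beta}\hA_1^\alpha\hB_2^\beta$, not $+\tfrac12$. This would surface immediately once you actually run the computation. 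Also note that your $k_\pm^\alpha$ contain $g_{\pm,\alpha}^{\mp1}$ themselves (unlike $h_\pm^{\alpha-1}$ in \eqref{Eq:ABbis}), so the Leibniz expansion of $\{\hA_1^\alpha,\hB_2^\alpha\}$ picks up extra same-index contributions absent in \S\ref{ss:ProofPB}; this is where the ``bookkeeping'' you flag really lives, and it is a little more than a verbatim transcription.

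For the diffeomorphism half your approach differs from the paper's. You invoke the inverse function theorem (Jacobian equals the identity at the origin), which is correct but non-constructive, and you noticed the snag that $(\hA^d,\hB^d)=(g_{+,d}^{-1}a^d,\,b^d g_{-,d})$ determines $(a^d,b^d)$ only implicitly. The paper instead writes down the inverse explicitly: from \eqref{X1}--\eqref{X2} and \eqref{factid1} one gets the identity
\[
\1_n-\hA^d\hB^d \;=\; g_{+,d}^{-1}g_{-,d},
\]
so near the origin one recovers $(\hat g_{+,d},\hat g_{-,d})\in\Gn^*$ by Gauss-type factorization of $\1_n-\hA^d\hB^d$, then sets $a^d=\hat g_{+,d}\hA^d$, $b^d=\hB^d\hat g_{-,d}^{-1}$; iterating via the analogous identities for $\alpha=d-1,\dots,1$ yields an explicit holomorphic inverse. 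Your IFT argument establishes the theorem, but the paper's factorization argument gives a concrete formula for $F^{-1}$ and makes the domain of invertibility more transparent.
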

\begin{proof}
 The calculation of the Poisson brackets of the functions in \eqref{X1}, \eqref{X2} is in principle straightforward,
and follows the derivation of Theorem \ref{Thm:Main} made in \ref{ss:ProofPB}.

The fact that the map $F$ is a local
diffeomorphism is similar to the argument used in \ref{ss:ProofDiff}.  We begin by observing
the identities
\begin{equation} \label{Eq:hatotherL}
 \1_n-{g}_{+,\alpha+1}\cdots {g}_{+,d}\hat{A}^\alpha \hat{B}^\alpha {g}_{-,d}^{-1}\cdots {g}_{-,\alpha+1}^{-1} = {g}_{+,\alpha}^{-1} {g}_{-,\alpha}
 \quad\hbox{for}\quad
 \alpha = d,\dots, 1,
\end{equation}
which follow from \eqref{X1}, \eqref{X2} using $g_{\pm, \alpha} = g_\pm(a^\alpha, b^\alpha)$, with $g_{\pm, d+1}:= \1_n$, and applying
the analogue of \eqref{factid1} for all $\alpha$.
Then, picking $\hat A$  and $\hat B$ near zero, we define the functions $(\hat g_{+, \alpha}, \hat g_{-,\alpha}) \in \Gn^*$ for $1\leq \alpha \leq d$,
by considering the factorization problems
\begin{equation} \label{Eq:hatd}
 \1_n-\hat{A}^d \hat{B}^d = \hat{g}_{+,d}^{-1} \hat{g}_{-,d}
\end{equation}
and iteratively
\begin{equation}\label{Eq:hatother}
\1_n- \hat{g}_{+,\alpha+1}\cdots \hat{g}_{+,d}\hat{A}^\alpha \hat{B}^\alpha \hat{g}_{-,d}^{-1}\cdots \hat{g}_{-,\alpha+1}^{-1} = \hat{g}_{+,\alpha}^{-1} \hat{g}_{-,\alpha}
 \quad\hbox{for}\quad
 \alpha = d-1,\dots, 1.
\end{equation}
This procedure uniquely specifies $\hat{g}_{\pm,\alpha}$ for all $\alpha$ if we set $\hat{g}_{\pm,\alpha} =\1_n$ for vanishing $\hat A$ and $\hat B$, and further require that these matrices
depend continuously on $\hat A, \hat B$  in an open neighborhood of zero.
As the final step, we define
\begin{equation} \label{Eq:hatinv}
 a^\alpha=\hat{g}_{+,\alpha}\hat{g}_{+,\alpha+1}\cdots \hat{g}_{+,d}\hat{A}^\alpha\,, \quad
 b^\alpha=\hat{B}^\alpha \hat{g}_{-,d}^{-1}\cdots \hat{g}_{-,\alpha+1}^{-1} \hat{g}_{-,\alpha}^{-1}\,.
\end{equation}
The definitions guarantee that if on the left-hand sides of \eqref{Eq:hatd}  and \eqref{Eq:hatother} we use \eqref{X1} and \eqref{X2}, then we obtain
\begin{equation}
 \hat{g}_{+,\alpha}=g_+(a^\alpha,b^\alpha)\,, \quad \hat{g}_{-,\alpha}=g_-(a^\alpha,b^\alpha)\,,
\end{equation}
and hence the map that we constructed by  \eqref{Eq:hatinv}  is indeed the local inverse of $F$.
\end{proof}
It should be noted that although the map $F$ from Theorem \ref{thm:X1} is only a local diffeomorphism, the formulae \eqref{hatPB} yield
a holomorphic Poisson structure on the full space $S(n,d)$.

\begin{rem}
By using $\cG_\pm$ \eqref{cGpmdef}, the formula
\be
(a,b) \mapsto (\cG_+(a,b), \cG_-(a,b))
\ee
defines a local  Poisson map from $(S(n,1), \{\ ,\ \}_\kappa)^{\times d}$ to $( \Gn^*, \{\ ,\ \}_{*}^\kappa)$.
This map  satisfies  the identity
\be
\cG_+(a,b)^{-1} \cG_-(a,b) = \1_n - \hA(a,b) \hB(a,b),
\ee
which is a counterpart of the identity \eqref{factid2}.
The left-action of $\Gn$ on $S(n,d)$ has the Poisson--Lie property with
respect to the bracket $\{\ ,\ \}_\kappa'$ \eqref{hatPB} on $S(n,d)$ and the bracket
$\{\ ,\ \}_G^{-\kappa}$ \eqref{I6} on $\Gn$. The map $(\hA,\hB) \mapsto \hat \Gamma(\hA,\hB):= \1_n - \hA \hB$
represents the (densely defined) moment map associated with this action, using the standard mapping \eqref{chi}
of $\Gn^*$ into $\Gn$.
To put it more explicitly, $\hA, \hB$ and $\hat \Gamma$ satisfy
\be
\{ \hat \Gamma_1, \hat \Gamma_2\}'_\kappa =  -\kappa \left(  \hat \Gamma_1 r_-^n \hat \Gamma_2 + \hat \Gamma_2 r_+^n \hat \Gamma_1 - \hat \Gamma_1 \hat \Gamma_2 r^n - r^n \hat \Gamma_1 \hat \Gamma_2 \right)
\label{Ga1prime}\ee
together with
\be
\{ \hA_1, \hat \Gamma_2\}_\kappa' = -\kappa (\hat \Gamma_2 r_+^n - r_-^n \hat \Gamma_2) \hA_1
\quad\hbox{and}\quad
\{ \hB_1, \hat \Gamma_2\}_\kappa' = -\kappa \hB_1 (r_-^n \hat \Gamma_2 - \hat \Gamma_2 r_+^n).
\label{Ga2prime}\ee
These relations have the same form as those in \eqref{Ga1} and \eqref{Ga2}, taking into account that now
we are referring to Poisson--Lie symmetry with respect to the bracket $\{\ ,\ \}_G^{-\kappa}$ on $\Gn$.
The interested reader can verify all these equalities by direct calculation.
\end{rem}

Finally, we state the sought after decoupling property of the Arutyunov--Olivucci Poisson bracket \eqref{+PB}.

\begin{cor} \label{cor:X3}
Let $\theta_A$ and $\theta_B$ be arbitrary constants satisfying $\theta_A \theta_B = - \frac{1}{\kappa}$.
Then the  rescaling
\be
\theta: (\hA, \hB) \mapsto (\cA,\cB):= (\theta_A \hA, \theta_B \hB)
\ee
gives a Poisson diffeomorphism from $(S(n,d), \{\ ,\ \}_\kappa')$ \eqref{hatPB} to $(S(n,d), \{\ ,\ \}_\kappa^+)$ \eqref{+PB}.
Composing this with the map $F$ from Theorem \ref{thm:X1}, we get a local Poisson diffeomorphism
\be
\theta \circ F: (S(n,1), \{\ ,\ \}_\kappa)^{\times d} \to (S(n,d), \{\ ,\ \}_{\kappa}^+).
\ee
This map enjoys the identity
\be
(\1_n + \kappa \cA \cB) \circ \theta \circ F   = \cG_+^{-1} \cG_-.
\label{AOres}\ee
\end{cor}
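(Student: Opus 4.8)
\textit{Proof plan.} The proof splits according to the three assertions of Corollary~\ref{cor:X3}, each of which reduces to elementary bookkeeping once Theorem~\ref{thm:X1} and the Remark following it are available. First I would check that $\theta$ is a Poisson diffeomorphism by substituting $\cA=\theta_A\hA$ and $\cB=\theta_B\hB$ directly into the brackets \eqref{hatPB} and comparing with \eqref{+PB}. The relations for $\br{\hA_1,\hA_2}'_\kappa$ and $\br{\hB_1,\hB_2}'_\kappa$ are quadratic in $\hA$, respectively $\hB$, so the scalar factors $\theta_A^2$, respectively $\theta_B^2$, occur on both sides and cancel, turning \eqref{hatPB} into \eqref{+PB} verbatim. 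For the mixed bracket one picks up an overall factor $\theta_A\theta_B$; on the quadratic terms $\hB_2 r^n_+\hA_1$ and $\hA_1 r^d_-\hB_2$ this factor is absorbed into the rescaled matrices, whereas on the constant term it produces $\kappa\,\theta_A\theta_B\,C_{12}^{n\times d}=-C_{12}^{n\times d}$ precisely because $\theta_A\theta_B=-\frac{1}{\kappa}$. Hence $\br{\cA_1,\cB_2}$ comes out exactly in the form displayed in \eqref{+PB}. Since $\theta$ is linear and invertible it is a global diffeomorphism, so the computation just sketched shows it is Poisson as claimed.

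Second, the composition $\theta\circ F$ is the composite of the local Poisson diffeomorphism $F$ from Theorem~\ref{thm:X1} with the global Poisson diffeomorphism $\theta$, so it is automatically a local Poisson diffeomorphism from $(S(n,1),\{\ ,\ \}_\kappa)^{\times d}$ to $(S(n,d),\{\ ,\ \}_\kappa^+)$; nothing further is needed here.

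Third, for the identity \eqref{AOres} I would start from the counterpart of \eqref{factid2} recorded in the Remark following Theorem~\ref{thm:X1}, namely $\cG_+(a,b)^{-1}\cG_-(a,b)=\1_n-\hA(a,b)\hB(a,b)$, where $\hA(a,b),\hB(a,b)$ are the functions \eqref{X1}, \eqref{X2} and $F$ is the map $(a,b)\mapsto(\hA(a,b),\hB(a,b))$. Pulling back by $\theta$ replaces $\hA\hB$ by $\theta_A\theta_B\,\cA\cB$, so that
\[
\kappa\,(\cA\cB)\circ\theta\circ F=\kappa\,\theta_A\theta_B\,(\hA\hB)\circ F=-(\hA\hB)\circ F,
\]
and therefore $(\1_n+\kappa\,\cA\cB)\circ\theta\circ F=\1_n-(\hA\hB)\circ F=\cG_+^{-1}\cG_-$, which is \eqref{AOres}. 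I expect no genuine obstacle in any of this: the entire content is tracking the rescaling constants, the only delicate point being the normalization $\theta_A\theta_B=-\frac{1}{\kappa}$, which is exactly what converts the constant term $+\kappa\,C_{12}^{n\times d}$ of \eqref{hatPB} into the $-C_{12}^{n\times d}$ of \eqref{+PB} and, consistently, turns $\1_n-\hA\hB$ into $\1_n+\kappa\,\cA\cB$.
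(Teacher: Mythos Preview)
Your proposal is correct and is exactly the argument the paper has in mind: the corollary is stated without proof because it follows immediately from Theorem~\ref{thm:X1} and the identity $\cG_+^{-1}\cG_-=\1_n-\hA\hB$ in the subsequent Remark by tracking the rescaling $\theta_A\theta_B=-\tfrac{1}{\kappa}$, precisely as you do. One cosmetic slip: the phrase ``Pulling back by $\theta$ replaces $\hA\hB$ by $\theta_A\theta_B\,\cA\cB$'' is stated backwards (it is $(\cA\cB)\circ\theta=\theta_A\theta_B\,\hA\hB$), but the computation you write out immediately afterwards is the correct one.
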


The observation that $(\1_n + \kappa \cA \cB)$ can be  realized by applying the mapping \eqref{chi} on the inverse $(\cG_+, \cG_-)^{-1}$ of a Poisson map $(\cG_+, \cG_-)$ into
$(\Gn^*, \{\ ,\ \}_{*}^\kappa)$ played an important role in the derivation of the trigonometric
complex spin Ruijsenaars--Schneider model by Arutyunov and Olivucci \cite{AO}.
(To be precise, they locally realized $(\cG_+, \cG_-)$ as a  moment map  generating a Poisson--Lie
 action of $(\Gn, \{\ ,\ \}_G^\kappa)$ on $(S(n,d), \{\ ,\ \}_\kappa^+)$.)
Our result \eqref{AOres} provides decoupled variables (the $(a^\alpha, b^\alpha)$ for $\alpha=1,\dots, d$) that give such a realization explicitly.
These new variables $(a,b)$ are expected to be useful for further studies of the reduction
treatment of the complex spin Ruijsenaars--Schneider model, similarly
 as proved to be the case
for the real form of this important integrable Hamiltonian system \cite{FFM}.

\section{Conclusion} \label{Sec:Conc}

In this paper we presented a detailed analysis of the $\Gn \times \Gd$
covariant Poisson structures \eqref{I4} and \eqref{+PB} on the linear space $S(n,d)$ \eqref{I1} for arbitrary natural numbers $n$ and $d$.
Our main results are encapsulated by Theorem \ref{Thm:Main}  and Theorem \ref{thm:X1} with Corollary \ref{cor:X3} that provide new realizations
of the corresponding Poisson algebras in terms of $d$ independent copies of `elementary spin variables' living in $S(n,1)$.
The subsequent appendix highlights further relevant properties of these Poisson structures, especially by giving
the underlying symplectic form on a dense open subset of $S(n,1)$.
These results may contribute, for example, to deepening
the understanding of  Ruijsenaars--Schneider type integrable many-body models with spin having hidden $\Gn$ Poisson--Lie  symmetry.
It is also an interesting open question to search for their quantum mechanical analogues in the future.

\bigskip
\noindent
{\bf Acknowledgements.}
The research of M.F. was supported by a Rankin-Sneddon Research Fellowship of the University of Glasgow.
The work of L.F. was supported in part by the NKFIH research grant K134946.



\appendix

\section{Additional properties of the Poisson bracket on \texorpdfstring{$S(n,d)$}{S(n,d)}} \label{Sec:App}

In this appendix, we show that the Poisson bracket \eqref{I4} on $S(n,1)$ can be seen as a particular example of
 the complexification  of Zakrzewski's  $\U(n)$ covariant Poisson brackets on $\CC^n$ \cite{Z96}.
We also point out that the Poisson bracket \eqref{I4} on $S(n,d)$ is never globally symplectic,
and present the symplectic form that corresponds to this Poisson bracket in a neighborhood of zero.

\subsection{The Zakrzewski Poisson brackets}

Let us introduce a real anti-symmetric biderivation on $\CC^n\simeq \R^{2n}$, which is written on the   components of
 $u\in \CC^n$ as
\begin{equation}
\begin{aligned}
  \br{u_i,u_j}=&-\epsilon \ic \sgn(i-j) u_i u_j \,, \\
  \br{u_i, \bar u_l}=&-\epsilon  \ic  \delta_{il} F+ \epsilon\ic G u_i \bar u_l-\epsilon \ic \delta_{il} \sum_{r=1}^n \sgn(r-i) |u_r|^2\,, \label{Eq:ZakUn}
\end{aligned}
  \end{equation}
 where $\epsilon \in \R^*$, and $F=F(|u|^2)$, $G=G(|u|^2)$ are two arbitrary functions. Denote by $F',G'$ the derivatives $F'(t)=\frac{d}{dt}F(t)$, $G'(t)=\frac{d}{dt}G(t)$.  We recall the following result due to Zakrzewski.
\begin{lem} \label{Lem:ZakReal}
\emph{(\cite{Z96})} The anti-symmetric biderivation \eqref{Eq:ZakUn} is always Poisson for $n=1$, while for $n\geq 2$ it is Poisson
if and only if
\begin{equation} \label{Eq:ZakCond}
 FF'+G(F-F' |u|^2)=|u|^2\,.
\end{equation}
Furthermore,
 the action $\U(n)\times \CC^n \to \CC^n$ defined
by left multiplication yields a Poisson map if $\U(n)$ is equipped with
the multiplicative Poisson bracket satisfying $\{g_1, g_2\}_{\U(n)} = -2\ic \epsilon [g_1 g_2, r^n]$.
\end{lem}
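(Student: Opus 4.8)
The plan is to verify the Jacobi identity for \eqref{Eq:ZakUn} by hand and then to establish the Poisson--Lie covariance by a tensorial computation parallel to the proof of Proposition~\ref{pr:Cov1}. Since \eqref{Eq:ZakUn} defines an anti-symmetric biderivation, its Jacobiator is tensorial (a derivation in each of its three arguments), so it is enough to evaluate it on triples of the coordinate functions $u_i,\bar u_i$; and because the biderivation is real, $\overline{\br{f,g}}=\br{\bar f,\bar g}$, it suffices to treat the triples with at most one conjugated entry, namely $(u_i,u_j,u_k)$ and $(u_i,u_j,\bar u_k)$, the others following by conjugation. For $(u_i,u_j,u_k)$ the relevant sub-bracket $\br{u_i,u_j}=-\epsilon\ic\sgn(i-j)u_iu_j$ is log-canonical, of the form $c_{ij}u_iu_j$ with constants $c_{ij}=-c_{ji}$, and the Jacobiator of such a bracket vanishes identically. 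When $n=1$ there are no admissible triples of the second type either (any triple of $u_1,\bar u_1$ contains a repeated entry and hence contributes zero by antisymmetry), equivalently $\CC^1\simeq\R^2$ and every anti-symmetric biderivation on a two-dimensional manifold is Poisson; this disposes of the case $n=1$, where no relation between $F$ and $G$ is needed.

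For $n\ge 2$ the entire content of the lemma lies in the triples $(u_i,u_j,\bar u_k)$. The first step is to compute $\br{|u|^2,u_i}$ and $\br{|u|^2,\bar u_i}$ from \eqref{Eq:ZakUn}, since $F$ and $G$ depend on $|u|^2=\sum_r|u_r|^2$ and their derivatives $F'$, $G'$ therefore enter the Jacobiator through the chain rule. One then expands $\br{\br{u_i,u_j},\bar u_k}+\br{\br{u_j,\bar u_k},u_i}+\br{\br{\bar u_k,u_i},u_j}$, distinguishing the cases $k\notin\{i,j\}$, $k=i$, $k=j$ (with further subdivision according to the orderings produced by the various $\sgn$ factors), and collects the coefficients of the independent functional combinations $F$, $G$, $FG$, $FF'$, $F'$, $G'$, $|u|^2F'G$, $\dots$, together with the purely quadratic terms in $u,\bar u$. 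One finds that all these coefficients vanish precisely when $FF'+G(F-F'|u|^2)=|u|^2$, and moreover that a single convenient index configuration (say $i<j$ with $k=i$) already forces this relation, which gives the ``only if'' direction. I expect this bookkeeping --- tracking the $\sgn$ combinatorics and the extra terms generated by differentiating $F$ and $G$ --- to be the main technical obstacle, even though each individual manipulation is elementary.

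Finally, for the covariance assertion one equips $\U(n)\times\CC^n$ with the product Poisson structure and computes $\br{(gu)_i,(gu)_j}$ and $\br{(gu)_i,\overline{(gu)_l}}$ by the Leibniz rule, using $\br{g_1,g_2}_{\U(n)}=-2\ic\epsilon[g_1g_2,r^n]$, the identity $\bar g=(g^{-1})^T$ valid on $\U(n)$ (which expresses $\br{g_1,\bar g_2}_{\U(n)}$ in terms of the given bracket), and the brackets \eqref{Eq:ZakUn} on $\CC^n$; the cross terms between the $g$- and $u$-variables drop out since the two factors are independent. The contribution of $\br{u_k,u_l}$ is handled exactly as the $\br{A_1,A_2}$ part in the proof of Proposition~\ref{pr:Cov1} (with the column $A$ replaced by $u$ and $d=1$, so the $r^d$-terms are absent): the $r^n$-terms recombine to reproduce the required bracket of $(gu)_i$ and $(gu)_j$. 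For the mixed bracket, the pieces proportional to $F$, to $Gu_i\bar u_l$, and to the $\sgn$-sum are covariant immediately, since unitarity of $g$ gives $|gu|^2=|u|^2$ (hence $F(|gu|^2)=F(|u|^2)$, $G(|gu|^2)=G(|u|^2)$) and $gg^\dagger=\1_n$; the remaining pieces follow from the same $r$-matrix manipulation as in Proposition~\ref{pr:Cov1}, using the analogue of the identity $g_1C_{12}^{n\times 1}=C_{12}^{n\times 1}g_2$. As a partial consistency check on the Jacobi part, \eqref{I4} with $d=1$ is the complexification of a member of the family \eqref{Eq:ZakUn}, whose Jacobi identity is already known; and one checks directly that, for instance, $F(t)=1+t$, $G\equiv-1$ solves \eqref{Eq:ZakCond}, since then $FF'+G(F-F'|u|^2)=(1+|u|^2)-1=|u|^2$.
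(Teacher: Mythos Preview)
The paper does not prove this lemma at all: it is stated with the attribution \emph{(\cite{Z96})} and the phrase ``We recall the following result due to Zakrzewski'', and no argument is supplied. So there is no ``paper's own proof'' to compare against; you are supplying one where the authors chose to cite the literature.

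Your outline is the standard direct verification and is correct in structure. The reduction to coordinate triples, the observation that the $(u_i,u_j,u_k)$ Jacobiator vanishes because the holomorphic sub-bracket is log-canonical, and the disposal of $n=1$ by dimension are all fine. The covariance argument is also sound: unitarity gives $|gu|^2=|u|^2$, so $F$ and $G$ are $\U(n)$-invariant; the relation $\bar g=(g^{-1})^T$ lets you compute $\{g_1,\bar g_2\}_{\U(n)}$ from the Sklyanin bracket; and the remaining $r$-matrix bookkeeping is indeed parallel to Proposition~\ref{pr:Cov1}.

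The one honest gap is that the heart of the lemma --- the computation showing the $(u_i,u_j,\bar u_k)$ Jacobiator vanishes iff \eqref{Eq:ZakCond} holds --- is asserted rather than carried out. You say ``one finds that all these coefficients vanish precisely when\ldots'' but do not exhibit even one case; since this calculation \emph{is} the content of the result, a proof should at least work through the representative configuration (e.g.\ $i<j$, $k=i$) explicitly and show how the combination $FF'+G(F-F'|u|^2)-|u|^2$ emerges. A minor point: your consistency example $F(t)=1+t$, $G\equiv-1$ does solve \eqref{Eq:ZakCond}, but the specific bracket \eqref{I4} corresponds (per the paper) to $F(t)=2+t$, $G\equiv-1$; both satisfy the condition, so your check is valid, just not the instance you seem to intend.
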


In complete analogy, we define an anti-symmetric, holomorphic   biderivation on $\CC^{2n}$ endowed with coordinates $(a_i,b_i)$ (where we see $a$ and $b$ respectively as a vector and a covector) by
\begin{subequations}
 \begin{align}
  \br{a_i,a_j}=&\frac{\kappa}{2} \sgn(i-j) a_i a_j \,,\quad
  \br{b_i,b_j}=-\frac{\kappa}{2} \sgn(i-j) b_i b_j\,, \label{Eq:PBZakaa}\\
  \br{a_i, b_l}=&\frac{\kappa}{2}   \delta_{il} F- \frac{\kappa}{2} G a_i b_l+\frac{\kappa}{2}  \delta_{il} \sum_{r=1}^n \sgn(r-i) a_r b_r\,,\label{Eq:PBZakbb}
 \end{align}
\end{subequations}
 where $\kappa \in \CC^*$, and $F=F(t)$, $G=G(t)$ are two arbitrary holomorphic functions of $t:=\sum_{r=1}^n a_rb_r$. We denote by $F',G'$ the derivatives of $F,G$ with respect to $t$. The following result can then be proved  as Lemma \ref{Lem:ZakReal}.
\begin{lem} \label{Lem:ZakComp}
The anti-symmetric biderivation \eqref{Eq:PBZakaa}--\eqref{Eq:PBZakbb} is always Poisson for $n=1$, while for $n\geq 2$ it is Poisson
if and only if
\begin{equation} \label{Eq:ZakCondComp}
 FF'+G(F-F' t)=t\,, \qquad t:=\sum_{r=1}^n a_rb_r\,.
\end{equation}
Furthermore, the action
\begin{equation}
\tau: \Gn \times \CC^{2n} \to \CC^{2n}\,,\quad \tau_g(a,b)=(ga,b g^{-1})\,,
\end{equation}
is a Poisson map
when $\Gn$ is equipped with the Poisson bracket \eqref{I6}.
\end{lem}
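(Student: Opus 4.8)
The plan is to prove Lemma~\ref{Lem:ZakComp} in close parallel with the argument behind Lemma~\ref{Lem:ZakReal}, exploiting the fact that the holomorphic biderivation \eqref{Eq:PBZakaa}--\eqref{Eq:PBZakbb} is obtained from \eqref{Eq:ZakUn} by a formal complexification in which one treats $a$ and $b$ as independent variables playing the roles of $u$ and $\bar u$, and $\kappa$ the role of $2\ic\epsilon$. Concretely, I would first record the dictionary: under the substitution $u_i\mapsto a_i$, $\bar u_l\mapsto b_l$, $|u_r|^2\mapsto a_rb_r$, $-2\ic\epsilon\mapsto\kappa$, the brackets \eqref{Eq:ZakUn} go over to \eqref{Eq:PBZakaa}--\eqref{Eq:PBZakbb} and the Zakrzewski integrability condition \eqref{Eq:ZakCond} becomes exactly \eqref{Eq:ZakCondComp}. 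Since the Jacobi identity for a biderivation on a polynomial (here, holomorphic) algebra is a polynomial identity in the structure functions and their derivatives, and since this identity holds in the real analytic category by Zakrzewski's lemma for all admissible $F,G$, the same polynomial identity holds formally, hence in the holomorphic category. This gives both directions of the ``if and only if'': the displayed condition is necessary and sufficient for Jacobi when $n\ge 2$, and no condition is needed when $n=1$ because then $\sgn(i-j)\equiv 0$ and $t=a_1b_1$, so the bracket degenerates to $\{a_1,b_1\}=\tfrac{\kappa}{2}(F-Ga_1b_1)$ with everything else zero, and Jacobi is automatic for a two-dimensional Poisson structure.

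Alternatively, if one prefers not to invoke the complexification principle as a black box, I would verify the Jacobi identity directly on the generators. By the Leibniz rule it suffices to check the three types of triples $\{a_i,a_j,a_k\}$, $\{a_i,a_j,b_l\}$, $\{a_i,b_l,b_m\}$ (the purely-$b$ case following from the $a\leftrightarrow b$, $\kappa\leftrightarrow-\kappa$ antisymmetry, or from the anti-Poisson involution of Remark~\ref{Rem:inv}). The first triple is the classical Yang--Baxter-type check for the $\sgn$-bracket and holds identically. The mixed triples produce, after collecting terms and using $\partial t/\partial a_i=b_i$, $\partial t/\partial b_l=a_l$, an expression proportional to $FF'+G(F-F't)-t$ times elementary combinatorial factors, so it vanishes precisely under \eqref{Eq:ZakCondComp}; tracking the diagonal ($\delta_{il}$) versus off-diagonal contributions and the telescoping of the $\sum_r\sgn(r-i)$ terms is the one genuinely computational piece.

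For the covariance statement, I would compute $\{(ga)_{i},(ga)_{j}\}$, $\{(ga)_i,(bg^{-1})_l\}$, and $\{(bg^{-1})_l,(bg^{-1})_m\}$ on $\Gn\times\CC^{2n}$ using the product Poisson structure built from \eqref{I6} on $\Gn$ and \eqref{Eq:PBZakaa}--\eqref{Eq:PBZakbb} on $\CC^{2n}$, exactly as in the proof of Proposition~\ref{pr:Cov1} and in Lemma~\ref{Lem:ZakReal}. The key point is that $t=\sum_r a_rb_r=\sum_r(ga)_r(bg^{-1})_r$ is $\Gn$-invariant, so $F$ and $G$ pull back to the same functions; the $r$-matrix cross terms from $\{g_1,g_2\}_G^\kappa=\kappa[g_1g_2,r^n]$ combine with the $\sgn$-brackets, and one uses the identities $g_1C_{12}^{n\times1}=C_{12}^{n\times1}g_2$ and $I^n a_1=a_2C_{12}^{n\times1}$ (as in the proof of Proposition~\ref{pr:Cov3}) to absorb the constant $\delta_{il}F$ term. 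One checks that all $g$-dependence cancels and the result reproduces \eqref{Eq:PBZakaa}--\eqref{Eq:PBZakbb} in the new variables, so $\tau$ is Poisson.

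The main obstacle is purely bookkeeping: in the direct approach, correctly organizing the sign functions and the nested sums $\sum_r\sgn(r-i)$ in the mixed Jacobi triples so that they telescope to a multiple of $FF'+G(F-F't)-t$, and similarly keeping track of the $r^n_+$ versus $r^n$ pieces in the covariance computation. Since these are exactly the manipulations already carried out by Zakrzewski in the unitary case, I would state that the holomorphic versions follow by the same computation (or by the complexification argument above) and not reproduce every line, in keeping with the paper's stated policy of relegating routine checks to the reader.
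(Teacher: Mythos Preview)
Your proposal is correct and matches the paper's approach: the paper gives no independent proof, stating only that the result ``can then be proved as Lemma~\ref{Lem:ZakReal}'', i.e.\ by repeating Zakrzewski's computation in the complexified variables, which is exactly what you outline (with the complexification-dictionary argument as a shortcut and the direct Jacobi check as the fallback). Your treatment is in fact more detailed than what the paper provides.
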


The Poisson bracket \eqref{I4} on $S(n,1)$ is an example of  the complex Zakrzewski Poisson brackets of Lemma \ref{Lem:ZakComp},
since it corresponds to the case $F(t)=2+t$ and $G(t)=-1$.

\begin{rem}  \label{Rem:inv}
The involution
 \begin{equation} \label{Eq:iota}
  \iota : \CC^{2n}\to \CC^{2n}\,, \quad
   \iota(a,b)=(b^T, a^T)\,,
 \end{equation}
is an anti-Poisson automorphism.
This follows from a direct verification of this property on the evaluation functions $a_i,b_i$, see \eqref{Eq:PBZakaa}--\eqref{Eq:PBZakbb}.
\end{rem}

\subsection{Degeneracy of the Poisson bracket}

We note that the Poisson bracket \eqref{I4} on $S(n,d)$ can be written in the coordinates $A_i^\alpha:=A_{i\alpha}$, $B_i^\alpha:=B_{\alpha i}$, as
 \begin{subequations}
  \begin{align}
 \br{A_i^\alpha,A_k^\beta}=&\frac{\kappa}{2}\sgn(i-k) A_k^\alpha A_i^\beta -\frac{\kappa}{2} \sgn(\alpha-\beta) A_k^\alpha A_i^\beta \,, \label{Eq:pbAA}\\
  \br{B_i^\alpha,B_k^\beta}=&-\frac{\kappa}{2}\sgn(i-k) B_k^\alpha B_i^\beta +\frac{\kappa}{2} \sgn(\alpha-\beta) B_k^\alpha B_i^\beta \,, \label{Eq:pbBB}\\
   \br{A_i^\alpha,B_k^\beta}=&\frac{\kappa}{2} \delta_{ik} A_i^\alpha B_k^\beta + \kappa \delta_{ik} \sum_{s>i} A_s^\alpha B_s^\beta \nonumber \\
&+\frac{\kappa}{2} \delta_{\alpha\beta} A_i^\alpha B_k^\beta + \kappa\delta_{\alpha\beta} \sum_{\mu<\alpha} A_i^\mu B_k^\mu + \kappa\delta_{\alpha\beta} \delta_{ik}      \,. \label{Eq:pbAB}
  \end{align}
 \end{subequations}
If we consider a point $p\in S(n,d)$ contained in the one-dimensional subset where
\begin{equation}
 1+A^1_n B^1_n=0\,, \quad A^\alpha_j,B^\alpha_j=0\, \,\, \text{for }(j,\alpha)\neq (n,1)\,,
\end{equation}
we directly see that the Poisson brackets \eqref{Eq:pbAA}--\eqref{Eq:pbBB} evaluated at  $p$ are zero. Furthermore, decomposing \eqref{Eq:pbAB} as
\begin{subequations}
  \begin{align}
\br{A_i^\alpha,B_k^\beta}=& 0\,, \qquad i\neq k,\alpha \neq \beta\,, \label{Eq:ApE0} \\
\br{A_i^\alpha,B_k^\alpha} =& \frac{\kappa}{2} A_i^\alpha B_k^\alpha + \kappa \sum_{\mu<\alpha} A_i^\mu B_k^\mu \,, \quad  i\neq k,\, \alpha=\beta\,, \label{Eq:ApE1} \\
\br{A_i^\alpha,B_i^\beta}=& \frac{\kappa}{2}  A_i^\alpha B_k^\beta + \kappa  \sum_{s>i} A_s^\alpha B_s^\beta \,, \quad i=k,\, \alpha\neq \beta\,, \label{Eq:ApE2} \\
   \br{A_i^\alpha,B_i^\alpha}=& \kappa A_i^\alpha B_i^\alpha + \kappa  \sum_{s>i} A_s^\alpha B_s^\alpha
+ \kappa  \sum_{\mu<\alpha} A_i^\mu B_i^\mu + \kappa  \,, \quad i=k,\,\alpha=\beta\,, \label{Eq:ApE3}
  \end{align}
\end{subequations}
we get that \eqref{Eq:ApE0}--\eqref{Eq:ApE2} vanish at $p$, while  we can write \eqref{Eq:ApE3} as
\begin{equation}
 \br{A_i^\alpha,B_i^\alpha}(p)=\kappa [1-\delta_{in}]\,[1-\delta_{\alpha 1}]\,.
\end{equation}
Hence, at $p$ the rank of the Poisson structure is $2(n-1)(d-1)$ and the Poisson bracket \eqref{I4} is not globally non-degenerate.

\subsection{The symplectic form on a dense open subset}

The  holomorphic Poisson bracket \eqref{I4} is non-degenerate on a dense  subset of $S(n,d)$, since it is non-degenerate at the origin.
 We now present the corresponding symplectic form for $d=1$. For $d\geq 2$, the symplectic form can be obtained around the origin by combining this result with Theorem \ref{Thm:Main}.
\begin{prop} \label{pr:Symp}
Consider  $S(n,1)$ with the Poisson bracket \eqref{I4}, and denote its elements by $(a,b)$.
 On the open subset where
\begin{equation}
G_i:=1+\sum_{r\geq i}a_rb_r \neq 0\,, \quad \forall\, 1 \leq i \leq n\,,
\end{equation}
the Poisson structure is non-degenerate, and the associated symplectic form can be  written as
\begin{equation}
 \begin{aligned} \label{Eq:SympF}
  \omega =-\frac{1}{\kappa} \sum_{i=1}^n \frac{da_i \wedge db_i}{G_i}
  +\frac{1}{2\kappa} \sum_{i=1}^n \sum_{s>i} \frac{1}{G_i G_{i+1}}
  \left(b_i da_i- a_i db_i \right) \wedge \left(b_s da_s+a_s db_s \right)\,,
 \end{aligned}
\end{equation}
where we set $a_{n+1}=b_{n+1}=0$ and $G_{n+1}=1$.
\end{prop}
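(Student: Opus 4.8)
The plan is to verify directly that the two-form $\omega$ written in \eqref{Eq:SympF} is the inverse of the Poisson bivector determined by the brackets \eqref{Eq:pbAA}--\eqref{Eq:pbAB} specialized to $d=1$ (where the $\alpha$-indices disappear and the only surviving relations are the $n=1$-column versions). Concretely, I would proceed in three stages: first argue $\omega$ is closed; second check it is nondegenerate on the stated open set; third confirm that the Hamiltonian vector fields it assigns to the coordinate functions $a_i,b_i$ reproduce the prescribed Poisson brackets. Since the claim is an equality of rational expressions in finitely many variables, once the bookkeeping is set up each stage is a finite computation.

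For closedness, note that $\omega$ is built from the exact one-forms $da_i$, $db_i$ and the functions $1/G_i$; writing $d(1/G_i) = -G_i^{-2}\sum_{r\ge i}(b_r\,da_r + a_r\,db_r)$, one computes $d\omega$ and checks the terms cancel. It is cleaner to introduce the "logarithmic" one-forms $\theta_i := b_i\,da_i - a_i\,db_i$ and $\sigma_i := b_i\,da_i + a_i\,db_i = d(a_i b_i)$; then $\sigma_i$ is exact, $G_i = 1 + \sum_{s\ge i}a_s b_s$ satisfies $dG_i = \sum_{s\ge i}\sigma_s$, and $\omega$ becomes a combination of $da_i\wedge db_i /G_i$ and $\theta_i\wedge dG_{i+1}/(G_iG_{i+1})$. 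One then uses $d\theta_i = -2\,da_i\wedge db_i$ and $\sum_{s>i}\sigma_s = dG_{i+1}$ to collapse $d\omega$ to zero. Alternatively, since we already know from Proposition \ref{pr:Cov3} / the moment map picture that the bracket \eqref{I4} satisfies the Jacobi identity (it is the complexification of a genuine Poisson structure, cf. the Remark after Lemma \ref{lem:lemI2}), closedness of $\omega$ follows automatically once we establish that $\omega^{-1}$ equals that bracket; so the logically economical route is to do the nondegeneracy-plus-inversion computation and invoke Jacobi for $d\omega=0$.

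For the inversion, the efficient approach is to exhibit the Poisson bivector $\Pi$ explicitly from \eqref{Eq:pbAA}--\eqref{Eq:pbAB} with $d=1$ and verify $\Pi^\sharp \circ \omega^\flat = \mathrm{id}$ on the basis $\{da_i, db_i\}$, i.e. compute the vector fields $X_{a_i}, X_{b_i}$ defined by $\iota_{X_{f}}\omega = df$ and check $\{f,g\} = X_f(g)$ reproduces the listed brackets. The block structure helps: $\omega$ is a sum over $i$ of a "diagonal" piece $-\kappa^{-1}G_i^{-1}da_i\wedge db_i$ plus lower-triangular coupling terms $\theta_i\wedge\sigma_s/(2\kappa G_iG_{i+1})$ for $s>i$, so the matrix of $\omega$ in the basis ordered as $(a_1,b_1,a_2,b_2,\dots)$ is block lower-triangular with invertible $2\times2$ diagonal blocks (the determinant of the $i$-th block is $\kappa^{-2}G_i^{-2}\neq 0$ on the given open set), giving nondegeneracy there immediately and making the inversion amenable to forward substitution. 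The main obstacle I expect is purely organizational: correctly handling the nested sums $\sum_{s>i}$ and $\sum_{\mu<\alpha}$-type telescoping so that the cross terms in $\omega^{-1}$ assemble into exactly the quadratic expressions $\frac{\kappa}{2}A_i B_k + \kappa\sum_{s>i}A_sB_s$ appearing in \eqref{Eq:pbAB}; I would keep track of this by first verifying the $n=2$ case by hand to fix signs and normalizations, then doing the general induction on $n$. A consistency check along the way: the anti-Poisson involution $\iota$ of Remark \ref{Rem:inv}, $\iota(a,b)=(b^T,a^T)$, should send $\omega$ to $-\omega$, which it manifestly does since it swaps $da_i\leftrightarrow db_i$ (flipping $da_i\wedge db_i$, $\theta_i$, and fixing $\sigma_i$ and $G_i$) — this catches sign errors cheaply.
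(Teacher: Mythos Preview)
Your proposal is correct and follows essentially the same route as the paper: verify on the coordinate functions that $\omega$ inverts the Poisson bivector, and infer nondegeneracy and closedness from that (the paper likewise deduces $d\omega=0$ a posteriori from the Jacobi identity rather than checking it directly). The only tactical difference is the direction of the check: the paper writes down the Hamiltonian vector field $X_{a_j}=\{-,a_j\}$ from the known bracket and verifies $\iota_{X_{a_j}}\omega=-da_j$ by a telescoping computation, which avoids having to solve for $X_f$ via the block-triangular forward substitution you propose.
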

\begin{proof}
Without loss of generality, we take $\kappa=2$.
We will prove using the $2$-form \eqref{Eq:SympF} that for any $1\leq j \leq n$
\begin{equation} \label{Eq:contract}
 \iota_{X_{a_j}}\omega=-da_j\,,
\end{equation}
where $X_{a_j}=\br{-,a_j}$ denotes the Hamiltonian vector field of $a_j$, which is given by
\begin{equation} \label{Eq:Xaj}
 X_{a_j}=\sum_{l\neq j} \sgn(l-j) a_ja_l\frac{\partial}{\partial a_l}
 -\sum_{l\neq j} a_j b_l\frac{\partial}{\partial b_l} - 2 G_j \frac{\partial}{\partial b_j}\,.
\end{equation}
By symmetry between $a$ and $b$, we will also have that $\iota_{X_{b_j}}\omega=-db_j$. These conditions then imply that $\omega$ is non-degenerate and  corresponds to the Poisson bracket on $S(n,1)$, hence it is also closed.

To prove \eqref{Eq:contract}, let us denote the three terms appearing in the vector field \eqref{Eq:Xaj} as $X_1,X_2,X_3$. Contracting with the $2$-form, we compute
\begin{equation}
 \begin{aligned} \label{Eq:EX1}
\iota_{X_1}\omega=& -\frac12 \sum_{l\neq j} \sgn(l-j)  \frac{a_j a_l}{G_l} db_l \\
&+\frac14 \sum_{l\neq j} \sgn(l-j) \sum_{s>l} a_j\frac{a_l b_l}{G_l G_{l+1}} \left(b_s da_s+a_s db_s \right) \\
&-\frac14 \sum_{l\neq j} \sgn(l-j) \sum_{i<l} a_j\frac{a_l b_l}{G_i G_{i+1}} \left(b_i da_i- a_i db_i \right)\,,
 \end{aligned}
\end{equation}
then
\begin{equation}
 \begin{aligned} \label{Eq:EX2}
\iota_{X_2}\omega=& -\frac12 \sum_{l\neq j}  \frac{a_j b_l}{G_l} da_l
+\frac14 \sum_{l\neq j}  \sum_{s>l} a_j\frac{a_l b_l}{G_l G_{l+1}}\left(b_s da_s+a_s db_s \right)  \\
&+\frac14 \sum_{l\neq j}  \sum_{i<l} a_j\frac{a_l b_l}{G_i G_{i+1}} \left(b_i da_i- a_i db_i \right)\,,
 \end{aligned}
\end{equation}
and finally
\begin{equation}
 \begin{aligned}
\iota_{X_3}\omega=& -da_j
+\frac12  \sum_{s>j} a_j\frac{1}{G_{j+1}} \left(b_s da_s+a_s db_s \right)  \\
&+\frac12 \sum_{i<j} a_j\frac{G_j}{G_i G_{i+1}} \left(b_i da_i- a_i db_i \right)\,.
 \end{aligned}
\end{equation}
After summing together the last two terms from \eqref{Eq:EX1} and \eqref{Eq:EX2}, we find
\begin{equation}
 \begin{aligned} \label{Eq:EXaj}
\iota_{X_{a_j}}\omega=&-da_j -\frac12 \sum_{s\neq j} \sgn(s-j)  \frac{a_j a_s}{G_s} db_s -\frac12 \sum_{s\neq j}  \frac{a_j b_s}{G_s} da_s \\
&+\frac12  \sum_{s>l>j} a_j\frac{a_l b_l}{G_l G_{l+1}} \left(b_s da_s+a_s db_s \right)
+\frac12 \sum_{s<l<j} a_j\frac{a_l b_l}{G_s G_{s+1}} \left(b_s da_s-a_s db_s \right) \\
&+\frac12  \sum_{s>j} a_j\frac{1}{G_{j+1}} \left(b_s da_s+a_s db_s \right)
+\frac12 \sum_{s<j} a_j\frac{G_j}{G_s G_{s+1}} \left(b_s da_s-a_s db_s \right)\,.
 \end{aligned}
\end{equation}
For fixed $s,j$, we note the  identities
\begin{equation}
 \sum_{l=j+1}^{s-1}\frac{a_l b_l}{G_l G_{l+1}} =\frac{1}{G_s}-\frac{1}{G_{j+1}}\,, \qquad
 \sum_{l=s+1}^{j-1}a_lb_l = G_{s+1}-G_j\,,
\end{equation}
which allow us to write the line in the middle of \eqref{Eq:EXaj} as
\begin{equation}
+\frac12  \sum_{s>j} a_j \left(\frac{1}{G_s}-\frac{1}{G_{j+1}} \right) \left(b_s da_s+a_s db_s \right)
+\frac12 \sum_{s<j} a_j\frac{G_{s+1}-G_j}{G_s G_{s+1}}  \left(b_s da_s-a_s db_s \right)\,.
\end{equation}
This can be simplified with the last line of  \eqref{Eq:EXaj}, and we get
\begin{equation}
 \begin{aligned} \label{Eq:EXajB}
\iota_{X_{a_j}}\omega=&-da_j -\frac12 \sum_{s\neq j} \sgn(s-j)  \frac{a_j a_s}{G_s} db_s -\frac12 \sum_{s\neq j}  \frac{a_j b_s}{G_s} da_s \\
& +\frac12  \sum_{s>j} a_j  \frac{a_sb_s}{G_s} \left(\frac{da_s}{a_s}+\frac{db_s}{b_s} \right)
+\frac12 \sum_{s<j} a_j\frac{ a_s b_s}{G_s} \left(\frac{da_s}{a_s}-\frac{db_s}{b_s} \right) \,,
 \end{aligned}
\end{equation}
which is just $-da_j$ as the other four terms cancel out.
\end{proof}

\begin{rem}
It can be shown that the Poisson tensor corresponding to
the bracket \eqref{I4} on $S(n,1)$
is degenerate precisely on the zero set of the function $\prod_{i=1}^n G_i$,
which is the complement of the set considered in Proposition \ref{pr:Symp}.
If in  \eqref{Eq:SympF} we put $b_j = \bar a_j$ and $\kappa = 2\ic$, then we recover the real symplectic form on $\C^n \simeq \R^{2n}$
given by \cite[Proposition A.6]{FFM}, from which our formula was obtained by complexification.
\end{rem}


\end{document}